\newcommand{\G}{\mathcal{G}}
\newcommand{\N}{\mathcal{N}}
\newcommand{\A}{\mathbf{A}}
\newcommand{\vv}{\mathbf{v}}
\newtheorem{theorem}{Theorem}
\newtheorem{definition}{Definition}
\newcommand{\norm}[1]{\left\lVert#1\right\rVert}
\newcommand{\blue}[1]{\textcolor{black}{#1}}
\begin{document}
%

\title{A general centrality framework based on node navigability}

%
%
%
%

\author{Pasquale~De~Meo,~
        Mark~Levene,
        Fabrizio~Messina,
        and~Alessandro~Provetti
\IEEEcompsocitemizethanks{\IEEEcompsocthanksitem P. De Meo is with the Department
of Ancient and Modern Civilizations, University of Messina. Messina,
Italy, 98166.\protect\\
E-mail: \href{mailto:pdemeo@unime.it}{pdemeo@unime.it}

\IEEEcompsocthanksitem M. Levene and A. Provetti are with the Department of Computer Science and Information Systems, Birkbeck, University of London. London WC1E 7HX, UK.\protect\\
E-mail: \href{mailto:ale@dcs.bbk.ac.uk}{\{mark,ale\}@dcs.bbk.ac.uk}

\IEEEcompsocthanksitem F. Messina is with the Department of Computer Science, University of Catania, Catania, Italy.\protect\\
E-mail: \href{mailto:messina@dmi.unict.it}{messina@dmi.unict.it} 

}

} 

\IEEEtitleabstractindextext{%
\begin{abstract}
Centrality metrics are a popular tool in Network Science to identify important nodes within a graph.  
We introduce the {\em Potential Gain} as a centrality measure that unifies many walk-based centrality metrics in graphs and captures the notion of node navigability, interpreted as the property of being reachable from anywhere else (in the graph) through short walks. 
Two instances of the Potential Gain (called the {\em Geometric} and the {\em Exponential Potential Gain}) are presented and we describe scalable algorithms for computing them on large graphs.
We also give a proof of the relationship between the new measures and established centralities.
The geometric potential gain of a node can thus be characterized as the product of its Degree centrality by its Katz centrality scores.
At the same time, the exponential potential gain of a node is proved to be the product of Degree centrality by its Communicability index.
These formal results connect potential gain to both the ``popularity'' and ``similarity'' properties that are captured by the above centralities.
\end{abstract}

} 

\maketitle

\IEEEdisplaynontitleabstractindextext

%
\IEEEpeerreviewmaketitle

\section{Introduction}
\label{sec:introduction}
Centrality metrics \cite{lu2016vital} provide a ubiquitous Network Science tool for the identification of the ``important'' nodes in a graph.
They have been widely applied in a range of domains such as early detection of epidemic outbreaks \cite{chung2009distributing}, viral marketing \cite{leskovec2007dynamics}, trust assessment in virtual communities \cite{AgresteMFPP15}, preventing catastrophic outage in power grids \cite{albert2004structural} and analysing heterogeneous networks \cite{Agreste-anobii15}.

Some centrality metrics define the importance of a node $i$ in a graph $\G$ as function of the distance of $i$ to other nodes in $\G$: for instance, in  {\em Degree Centrality}, the importance of $i$ is defined as the number of the nodes which are adjacent to $i$, i.e. which are at distance one from $i$. 
Analogously, {\em Closeness Centrality} \cite{newman2010networks,boldi2014axioms} classifies as important those nodes which are few hops away from any other node in $\G.$

Another class of centrality metrics looks at walk/path structures in $\G$: for instance, the {\em Betweenness  Centrality} \cite{newman2010networks} of a node depends on the number of shortest paths crossing it and, thus, nodes with largest betweenness centrality scores are those which intercept most of the shortest paths in $\G$.
A further popular metric is the {\em Katz's Centrality Score} \cite{katz1953new}, which is understood as the weighted number of walks terminating in the node. 
In Katz's centrality metric the weighting factor is inversely related to walk length and, thus, long (resp., short) walks have a small (resp., large) weight.
For a suitable choice of the weighting factor, the Katz centrality score converges to the {\em Eigenvector Centrality} \cite{benzi2014matrix,boldi2014axioms} and to the {\em PageRank} \cite{brin1998anatomy,boldi2014axioms}.

To the best of our knowledge, however, there is no previous work in which the centrality of a node is closely related to the notion of {\em navigability,} defined as the ease at which it is possible to reach a target node $i$ regardless of the node $j$ chosen as source node.
Navigability is one of the most important features for a broad range of natural and artificial systems which have the transportation of information (e.g. a computer network) or the trade of goods (e.g. a road network) as their primary purpose.

Early studies on graph navigability were inspired by the seminal work of Travers and Milgram \cite{travers1967small} on the ``small world'' property for social networks: in a celebrated experiment, randomly-chosen Nebraska residents were asked to send a booklet to a complete stranger in Boston. Selected individuals were required to forward the booklet to any of their acquaintances whom they deemed likely to know the recipient or at least might know people who did. In some cases, the booklet actually reached the target recipient by means, on average, of 5.2 intermediate contacts, thus suggesting an intriguing feature of human societies: in large, even planetary-scale, social networks, pairs of individuals are connected through {\em shorts chains of intermediaries} and ordinary people are able to uncover these chains \cite{kleinberg2000small,GoMuWa09,LeHo08}.
Several empirical studies have revealed the small-world effect in diverse domains such as metabolic and biological networks \cite{jeong2000large}, the Web graph \cite{broder2000graph}, collaboration networks among scientists \cite{newman2001structure} as well as social networks \cite{dodds2003experimental,watts1998collective}.

So far, centrality metrics and navigability have been investigated in parallel, yet their research tracks are disconnected. 
Thus, an important (and still unanswered) direction of inquiry is the introduction of centrality metrics that are related to the {\em navigability} of a node.

Our main contribution is to tackle the questions above by extending previous work 
by Fenner et al. \cite{fenner2008modelling} who studied navigability in the context of web surfing. 
The main output of our research is a general framework in which the Potential Gain unifies several walk-based centrality metrics and captures the notion of navigability.

The potential gain of a node $i$ depends on the number of walks $w_k(j,i)$ of length $k$ that connect $i$ with any other node $j.$ 
The underlying idea is that, for a fixed $k,$ the larger $w_k(j, i),$ the higher the chance that $j$ will reach $i,$ regardless of the specific navigation strategy. 
In our model, the contribution of each walk to the potential gain shall decrease with its length $k.$ 
This intuition is formalized by the introduction of a weighting factor $\phi(k)$ which monotonically decreases with $k$ to penalize long walks.

\noindent
We have developed two variants of the potential gain of \cite{fenner2008modelling}, namely:

\begin{itemize}
	\item the {\em Geometric Potential Gain,} in which $\phi(k)$ decays as $\delta^k$, where $\delta$ is a parameter ranging between $0$ and the inverse of the spectral radius $\lambda_1$ of $\G$%
	\footnote{Recall that the spectral radius of $\G$ is defined as the largest eigenvalue of the adjacency matrix of $\G$.}, and
	
	\item the {\em Exponential Potential Gain,} in which $\phi(k)$ decays in exponential fashion.
\end{itemize} 

Both the geometric and exponential gain of $i$ can be thought as the product of one index (Degree Centrality) related to the \emph{popularity} of $i$ and another (Katz Centrality score, for the geometric potential gain, and the Communicability Index \cite{benzi2013total,estrada2005subgraph} for the exponential potential gain) which reflects the degree of \emph{similarity} of $i$ with all other nodes in the network. 
The combination of popularity and similarity has proven to closely resemble the way humans navigate large social networks \cite{csimcsek2008navigating} or attempt to locate information in large information networks such as Wikipedia \cite{west2009wikispeedia,west2012automatic,helic2013models}.

Our formalisation applies the Neumann series expansion \cite{horn2013matrix} to efficiently, yet accurately approximate both the geometric and exponential gain.
Both theoretical and experimental analysis show that our approach is appropriate for accurately computing the geometric and exponential potential gain in large real-life graphs consisting of millions of nodes and edges, even with modest hardware resources.

We validated our approach on three large datasets: \textsc{Facebook} (a graph of friendships among Facebook users), \textsc{DBLP} (a graph describing scientific collaboration among researchers in Computer Science) and \textsc{YouTube} (a graph mapping friendship relationships among YouTube users). 

\noindent
The main findings of our study can be summarized as follows:

\begin{enumerate}
\item The amount of time needed to compute the geometric or the exponential potential gain {\em does not depend} on the number of nodes/edges of a graph; instead, it depends on the spectral radius $\lambda_1$: 
the larger $\lambda_1$, the better connected the graph and, thus, the larger the number of walks needed to get a good approximation of the geometric/exponential potential gain.

\item For small values of $\delta$, the geometric potential gain is highly correlated with Degree Centrality, while for large values of $\delta$ it displays a strong and positive correlation with Eigenvector Centrality.

\item In the case of the geometric potential gain, walks of small length (i.e., up to ten) are sufficient to obtain a good approximation. 
In contrast, to compute the exponential potential gain our algorithm needed to construct longer random walks, in some cases up to ten times longer than those required for the computation of the geometric potential gain.

\item As a consequence of the above point, the geometric potential gain seems to be the most efficient solution for large graphs. 
\end{enumerate} 

This article is organized as follows: in Section \ref{sec:background} we provide basic definitions that will be used throughout the paper. 
In Section \ref{sec:related-works} we review related work.
Section \ref{sec:network-navigability} introduces the geometric and exponential potential gain and illustrates their main properties.
In Section \ref{sec:methods} we discuss how to efficiently calculate the geometric and exponential potential gain, while Section \ref{sec:experiments} details the experiments we have performed. 
Finally, in Section \ref{sec:conc} we draw our conclusions.

\section{Background}
\label{sec:background}

In this section we introduce some basic terminology for graphs that will be largely used throughout this article.

Let a graph $\mathcal{G}$ be an ordered pair $\G = \langle N, E \rangle$ where \textit{N} is a set of {\em nodes}, here also called {\em vertices}, and $E = \{\langle i, j \rangle: i,j \in V\}$ is the set of {\em edges.} 
As usual, $\G$ is {\em undirected} if edges are unordered pairs of nodes and {\em directed} otherwise.
In this article we will consider only undirected graphs. 

Also, let $n = \vert V \vert$ be the number of nodes, $m = \vert E \vert$ the number of edges of $\G$. 
For any given node \textit{i} its neighborhood $\N(i)$ is the set of nodes directly connected to it; its {\em degree} $d_i$ is the number of edges incident onto it, i.e., $d_i = \vert \N(i) \vert$.

A {\em walk} of length $k$ (with  $k$ a non-negative integer) is a sequence of nodes $\langle i_0, i_1, \ldots, i_k\rangle$ such that consecutive nodes are directly connected: $\langle i_{\ell}, i_{\ell+1} \rangle\in E$ for $\ell \in [0..k-1].$ 
Also, we use the term {\em path} for walks that do not have repeated vertices. 
A walk will be {\em closed} if it starts and ends at the same node.


We will represent graphs by their associated {\em adjacency matrix,} $\A,$ defined as $a_{ij}=1$ if $\langle i,j\rangle \in E$ and 0 otherwise. 
Sometimes we may slightly simplify notation with $a_{ij} = \A_{ij}.$ 
The adjacency matrix provides a compact formalism to describe many graph properties: for instance, the matrix $\A^2$ where $a^2_{ij} = \sum_{r = 1}^{n} a_{ir}a_{rj}$, gives the number of walks of length two going from $i$ to $j$. 
Inductively, for any positive integer $k$, the matrix $\A^k$ will give the number of closed (resp., distinct) walks of length $m$ between any two nodes $i$ and $j$ if $i = j$ (resp., if $i \neq j$) \cite{cvetkovic1997eigenspaces}.

It is a well-know fact that the adjacency matrix of any undirected graph is {\em symmetric} and, hence, all its eigenvalues $\lambda_1 \geq \lambda_2 \geq \ldots \geq \lambda_n$ are real. 
The largest eigenvalue $\lambda_1$ of $\A$ is also called its {\em principal eigenvalue} or {\em spectral radius} of $\G.$
Moreover, the corresponding eigenvectors $\vv_1, \ldots, \vv_n$ will form an orthonormal basis in $\mathbb{R}^n$ \cite{strang1993introduction}.
Eigenpairs $\langle \lambda_i, \vv_i \rangle$ are formed by the eigenvalue $\lambda_i$ and the corresponding eigenvector $\mathbf{v}_i$.

\section{Related Work}
\label{sec:related-works}

The task of searching for and navigating in large networks has been extensively studied in the past for a broad range of domains such as routing in small world networks \cite{travers1967small}, locating pages in the World Wide Web \cite{menczer2002growing,broder2000graph}, finding the most knowledgeable individual in an enterprise/academic social network, discovering resources in a P2P network and building recommender systems \cite{lamprecht2016method}. 
Inspired by the classification scheme introduced by Helic {\em et al.}, \cite{helic2013models}, we divide search and navigation into two main classes: {\em (a) Endogenous Search}. 
In this class, there are {\em multiple} agents embedded in the network and the navigation task is depicted as a decentralized decision process in which agents collaborate to discover a path in the network.
Agents are assumed to have only a local knowledge of the network topology and, in addition, querying a neighboring node (e.g., to route a message) may have a non-negligible cost. {\em (b) Exogenous Search}. 
This class occurs whenever a user aims at navigating the Web \cite{fenner2008modelling,levene2004navigating} or an information network such as Wikipedia \cite{west2012automatic,lamprecht2017structure}. 
In exogenous search, there is {\em only one} agent involved in navigation task and it does not belong to the network. 
As in endogenous search, the agent (either human and artificial) only posses local knowledge about the network topology.
Unlike the endogenous search, however, the cost for visiting a node is generally low.

In the following two sections we review methods in endogenous and exogenous search (see Sections \ref{sub:endogenous-search} and \ref{sub:exogenous-search} below).

\subsection{Endogenous search approaches}
\label{sub:endogenous-search}


Many mathematical models have been proposed to explain why networks are, in an informal sense, \textit{navigable.}
Some of the best-known models are described in \cite{kleinberg2000navigation,kleinberg2000small,kleinberg2002small,watts2002identity}.
The original Watts-Strogatz (WS) model \cite{watts1998collective} generated random graphs in which pairs of nodes belonging to distant parts of the graph may be connected through random edges, called {\em long-range weak ties.} 
The WS model was thus effective in forcing the graph to be ``small,'' i.e., to assure the presence of paths consisting of few edges between any pair of nodes. 
Nevertheless, the WS model alone is unable to explain why people are capable of discovering such paths.

One class of approaches to search large networks relies on the notion of {\em popularity.} 
Adamic et al. \cite{adamic2001search} is perhaps the best-known approach in that category: here the search task is modelled as a random walk on the graph $\G$. 
In a specific step, if the walker occupies a node $l$ which is not
the target one (and none of the neighbours of $l$ is the target one), then the walker chooses the unvisited neighbour with largest degree. 

Another class of approaches -- called {\em similarity-based approaches} -- exploit homophily to speed up search tasks \cite{kleinberg2000small,watts2002identity}.
Kleinberg \cite{kleinberg2000navigation} described a generalization of the WS model to explain why decentralized search is effective in real networks. 
In Kleinberg's model, nodes of a social network are arranged to form a bi-dimensional grid (called {\em a lattice);} each node is connected to its neighbours in the lattice and, in particular, the distance $d(u, v)$ between two nodes equals the number of grid steps separating them. 
As a result, each node $v$ is connected to its four local contacts (i.e., nodes at distance one from $v$).
In addition, a random edge---called a {\em long range edge}---connecting $v$ with a node $w$ is generated with probability proportional to $d(v,w)^{-q}$, $q$ being the so-called clustering exponent of the model.
Kleinberg proved that if $q = 2$, then the performance of decentralized search is optimal, i.e. there exists an algorithm that, on average, is able to deliver a message from an arbitrary source node to an arbitrary target in $O(\log^2 n)$ time. 

Another contribution is due to Watts {\em et al.} \cite{watts2002identity} who proposed a model to explain network navigability in which nodes aggregate into {\em groups} on the basis of some shared attributes such as job or geographic location. 
For each attribute, a population can be split into a hierarchical set of layers $\mathcal{H}$ in which the top layer describes the entire population, while layers at increasing depth define a cognitive division of the population into more specific groups. 
Individuals can manage two kinds of information to decide to whom a message should be forwarded to:  first, {\em social distance}, which accounts for the similarity of two nodes and, second, {\em network distance}, i.e., the number of network paths that can be detected by looking at the neighbors. 

Social distance between two individuals $i$ and $j$ can be estimated by considering the groups $i$ and $j$ belong to and how distant these groups are in the layers of $\mathcal{H}$. 
As such, social distance is a kind of global metrics but, unfortunately, it is not a true distance in the sense that individuals belonging to close groups may be separated by long paths in the social network graph. 

Network distance, on the other hand, is a true distance but a node only has access to a local portion of the network and, thus, it can correctly calculate network distances only for nodes which are separated from it by a few hops.
By means of simulations \cite{watts2002identity} demonstrated that social distance is effective in approximating network distance and, thus, to successfully in directing messages across the network.

More recently Csimcsek {\em et al.} \cite{csimcsek2008navigating} suggest to combine popularity- and similarity-based methods and, to this end, they describe a graph search algorithm that, at each step, forwards a message from the current node $i$ to one of its neighbours, say $j$, such that the product $d_i \times q_{ij}$ is maximum; here $q_{ij}$ quantifies homophily between $i$ and $j$.

\blue{Unlike the approaches above, which assume that some attributes are available at each node, our approach only makes use of connectivity patterns to calculate the centrality of a node.}

\subsection{Exogenous search approaches}
\label{sub:exogenous-search}

Exogenous search is mostly related to search in information networks such as a collection of Web pages or Wikipedia.

One of the most common search strategies on the World Wide Web (WWW) is {\em surfing}, in which a user moves from a Web page to another one by following hyperlinks. 
Huberman {\em et al.} \cite{huberman1998strong} introduced a probabilistic model to describe surfing. 
In this model, the sequence of Web pages a user visits is regarded as the realisation of a random process and each Web page is associated with a value to the user. 
A user will stop surfing if the estimated cost of accessing a new Web page is bigger than the expected value of the information the user may get from accessing it. 

More recently, West and Leskovec \cite{west2012human} analysed how people navigate an information network such as Wikipedia in order to reach a specific target. 
To this end, they used an online computation game, called {\em Wikispeedia} \cite{west2009wikispeedia}, in which Wikipedia information seekers are given two random articles and they are required to navigate from one to the other by clicking as few hyperlinks as possible. 
In a subsequent paper \cite{west2012automatic}, they compared the accuracy of several decentralized search algorithms and benchmarked them  against  the  human  navigation  paths.  
Such a study highlighted two main phases of human navigation in information networks: 
{\em (i) Zoom-Out}: here, users strives to reach the network {\em core} (or a hub in the network core); such a core consists of a Wikipedia page with many links to other pages in Wikipedia. 
In this step, humans would prefer pages with many outgoing links ({\em high degree pages}). 
{\em (ii)  Zoom-in}, in  which  users leave the core to get closer to a topic. Specifically, if we think of segmenting Wikipedia pages into clusters on the basis of their topics, such a phase would consist of entering into a cluster. In the zoom-in phase, users prefer to look for similar nodes in order to orient their search.

A nice approach to combine decentralized search was described by Helic {\em et al.} \cite{helic2013models} who applied decentralized search algorithms such as those described in \cite{kleinberg2000navigation} to model human navigation in information networks. 
They considered an online navigation game (called {\em WikiGame}); in this game, a user starts from a random Wikipedia page and navigates to a target page.
More than 250,000 click paths were collected  and studied to determine the factors influencing players' decisions. 
The main finding in \cite{helic2013models} is that two mechanisms regulate the way humans seek for information in large networks: i) {\em exploitation,} i.e., humans follow specific hyperlinks whenever they are confident enough that those will get them closer to the target they want, and ii) {\em exploration}, i.e., users navigate at random an information network, when their knowledge about how current links relates to a target Web page is insufficient. 
The quantitative analysis showed that exploration steps account, on average, for $15-20\%$ of collected links, while exploitation accounted for the remaining $80-85\%$ of collected links.

\section{A model of network navigability}
\label{sec:network-navigability}

In this section we introduce our notion of network navigability, in Subsection \ref{sub:navigability-potential} along with the Potential Gain as a new centrality measure and a new general framework that unifies many walk-based centrality indices and captures our notion of navigability. 
In Section \ref{sub:geometric-exponential-potantial-gain} we describe two versions of the potential gain, called the {\em geometric} and the {\em exponential} potential gains.
In Subsection \ref{sub:relation-to-centrality} we compare the two versions of potential gain with other, well-known, centrality metrics from the literature.
In Subsection \ref{sub:relation-geometric-exponential} we investigate the relationship between the geometric and the exponential potential gain.
Finally, \ref{sub:fast-calculation-geom-expon} outlines our approach to calculating the geometric and exponential potential gain.

\subsection{Network navigability and the potential gain}
\label{sub:navigability-potential}

Starting from the Travers-Milgram's experiment \cite{travers1967small}, several studies have sought to characterize under which conditions a graph is deemed {\em navigable}. 
According to Kleinberg \cite{kleinberg2000navigation}, a graph $\G$ is {\em navigable} if i) its diameter is bounded by a polynomial in $\log n$ (here $n$ is the number of nodes in $\G$) and ii) it has a strongly-connected component which contains almost all of the nodes of $\G$. 
Kleimberg's navigability is a property of the graph, whereas our purpose is to define the notion of navigability at the {\em node level.}

For navigability at the node level we leverage the main findings of Lamprecht  et al. \cite{lamprecht2015improving} on the navigability of a recommender network built on top of the Internet Movie Database (IMDB). 
Their work introduced some topological indices to quantify how difficult it is to navigate the recommender network.
Specifically, they suggested to compute the {\em eccentricity} $ecc(j)$ of each node $j$, defined as the length of the longest shortest-path converging to $j$ from any other node belonging to the same connected component of $j$. 
Thus, nodes with small eccentricity (also termed {\em efficient reachability}  can be easily reached from any other node of the graph.

Eccentricity may seem a good starting point for the formalisation of network navigability but ``suffers'' from some known issues. 
First, the eccentricity $ecc(j)$ is dominated by the distance, from $j$ to the farthest node: as a consequence, $j$ could have a large eccentricity even if it is close to almost all of nodes in $\G$.
Second, any time the length of the shortest path connecting node $i$ to $j$ is above a threshold $\theta$, the navigation from $i$ to $j$ is considered ``hard,'' regardless of the topology.

The starting point for our work is the framework proposed by Fenner et al. \cite{fenner2008modelling} who studied the problem of identifying ``good pages'' from which to start exploring the Web. 
They classify a page $p$ as a good starting point if it satisfies the following criteria: {\em (1) it is relevant}, i.e. the content of $p$ closely matches user’s information goals, {\em (2) page $p$ is central}, i.e., the distance of $p$ to other Web pages in the Web graph is as low as possible and {\em (3) page $p$  is connected}, in the sense that $p$ is able to reach a maximum number of other pages via its outlinks.\\
A key difference between Fenner et al. and our work is that they defined the navigability score of a page/a node as its ability to act as the {\em source node} for reaching all the other nodes. 
In our setting, instead, we consider the node as the {\em target} of search, as described next.

Let us fix a source node $j$ and a target node $i$ and provide an estimate $\tau(j, i)$ of how ``easy'' it will be for $i$ to be reached if we choose $j$ as source node. 
Intuitively, the larger the number of walks from $j$ to $i$, the easier it would be to reach $i$ when starting from $j$.
In addition, assume that the task of exploring a graph is costly and such a cost increases as the length of the walks/paths we use for exploration purposes increases. 
Therefore, shorter walks should be preferred to longer ones.
By combining the requirements above, we obtain:

\begin{equation}\label{eqn:tau-ceoff}
\tau(j, i) = \sum_{k=1}^{+\infty}\phi(k)\cdot w_k(j,i)
\end{equation}

\noindent 
here $w_k(j,i)$ is the number of walks of length $k$ going from $j$ to $i$ and the non-increasing function $\phi(k)$ acts as penalty for longer walks. 
If we sum over all possible source nodes $j$, we obtain a global centrality index $pg(i)$ for $i$:

\begin{equation}\label{eqn:potential-as-sum}
pg(i) = \sum_{j \in N} \tau(j,i).
\end{equation}

\noindent
In analogy to Fenner et al., let $pg(i)$ be the {\em potential gain} of $i$.

In conclusion, the main differences between Potential Gain and Eccentricity are : {\em i)} the computation of PG is grounded on walks while the computation of the eccentricity is based on paths, 
{\em ii)} the potential gain consider all walks converging to a target node $j$ while the eccentricity consider only shortest paths reaching $j$ and, 
{\em iii)} in the potential gain, the contribution of a walk of length $k$ is penalized by a factor $\phi(k)$, while in the calculation of the eccentricity we take the length of the shortest path as is (i.e., with no penalization).

\subsection{The geometric and exponential potential gain}
\label{sub:geometric-exponential-potantial-gain}

Given the above specifications, we first define the potential gain in matrix notation.
For the base case, consider walks of length k=1, i.e., direct connections. 
Only the neighbours of a node $i$ will contribute to the potential gain of $i,$ which leads to the trivial conclusion that, at $k= 1$, nodes with the largest degree are also those ones with the largest potential gain. 

We define the vector $\mathbf{p}$ such that $\mathbf{p}_i = pg(i)$ for every node $i$:

\begin{equation}
\label{eqn:potential-matrix}
\mathbf{p} = \phi(1)\cdot \mathbf{A} \times \mathbf{1}.
\end{equation}

\noindent
If we include walks of length two, then we have to consider the squared adjacency matrix $\mathbf{A}^2$. 
So, we add a contribution $\phi(2)\cdot \mathbf{A}^2 \times \mathbf{1}$ to the potential gain.

By induction, nodes capable of reaching from $i$ through walks of length up to $k$ provide a contribution to the potential gain equal to $\phi(k)\cdot \mathbf{A}^k \times \mathbf{1}$. 
By summing over all possible values of $k$ we get to the following expression for $\mathbf{p}$:

\begin{dmath}
\label{eqn:potential-gain-matrix}
\mathbf{p} = \phi(1)\mathbf{A} \times \mathbf{1} + \phi(2)\mathbf{A}^2 \times \mathbf{1} + \ldots + \phi(k)\mathbf{A}^k \times \mathbf{1} + \ldots= {\sum_{k=1}^{+\infty} \left(\phi(k)\mathbf{A}^k\times \mathbf{1}\right) = \left(\sum_{k=1}^{+\infty}\phi(k)\mathbf{A}^k\right) \times \mathbf{1}}
\end{dmath}

To attenuate the effect of the walks' length, we will consider two weighting functions, namely:

\begin{enumerate}
\item {\em Geometric:} $\phi(k) = \delta^{k-1}$ with $\delta \in (0,1)$. 
So we define the {\em geometric potential gain,} $\mathbf{g}$:

\begin{equation}
\label{eqn:geometric-potential-gain}
\mathbf{g} = \left(\mathbf{A} + \delta \mathbf{A}^2 + \ldots + \delta^{k-1} \mathbf{A}^k + \ldots \right) \times\mathbf{1}
\end{equation} 

\item {\em Exponential:} $\phi(k) = \frac{1}{(k-1)!}$. 
So we define the {\em exponential potential gain,} $\mathbf{e}$:

\begin{equation}
\label{eqn:exponential-potential-gain}
\mathbf{e} = \left(\mathbf{A} + \mathbf{A}^2 + \ldots + \frac{1}{\left(k -1 \right) !} \mathbf{A}^k + \ldots\right) \times \mathbf{1}
\end{equation} 

\end{enumerate}

\subsection{Relation to centrality measures}
\label{sub:relation-to-centrality}

The geometric and the exponential potential gain introduced above yield a {\em ranking} of network nodes and, therefore, it is instructive to compare them with popular centrality metrics.
Recall that we defined the {\em spectral radius} $\lambda_1$ of $\mathbf{A}$ as the largest eigenvalue of $\mathbf{A}$.

As for the geometric potential gain, if we let $\delta < \lambda_1^{-1}$, the following expansion holds:

\begin{dmath}
\label{eqn:geometric-potential-series}
\mathbf{g} = \left(\mathbf{A} + \delta \mathbf{A}^2 + \ldots + \delta^{k-1} \mathbf{A}^k + \ldots \right) \times\mathbf{1} = 
\mathbf{A} \times \left(\mathbf{I} + \delta \mathbf{A} + \ldots + \delta^{k-1} \mathbf{A}^{k-1} + \ldots \right) \times\mathbf{1}\\
= \mathbf{A} \times \left(\mathbf{I} - \delta \mathbf{A}\right)^{-1} \times\mathbf{1}
\end{dmath}

\noindent 
in which we make use of the {\em Neuman series} \cite{horn2013matrix}:
  
\begin{equation}
\label{eqn:neuman-series}
\left(\mathbf{I} +  \ldots + \delta^{k-1} \mathbf{A}^{k-1} + \ldots \right) = \left(\mathbf{I} - \delta \mathbf{A}\right)^{-1}.
\end{equation}

At this point, the term $\left(\mathbf{I} - \delta \mathbf{A}\right)^{-1} \times \mathbf{1}$ is exactly the {\em Katz centrality score} \cite{katz1953new,leicht2006vertex}, a popular centrality metric that defines the importance of a node as a function of its similarity with other nodes in $\G.$
Hence, we can say that the geometric potential gain combines two kind of contributions: {\em popularity,} as captured by node degree, and {\em similarity} as captured by Katz's similarity score. 


It is also instructive to consider what happens for extreme values of $\delta$: if $\delta \to 0$, then the geometric potential gain tends to $\mathbf{A} \times \mathbf{1}$, i.e., it coincides with degree.
In contrast, if $\delta \to \frac{1}{\lambda_1}$, then the Katz centrality score converges to {\em eigenvector centrality} \cite{benzi2014matrix}, another popular metric adopted in Network Science. 
\blue{Boldi and Vigna \cite{boldi2014axioms} show that the Katz Centrality score is also strictly related to the PageRank. 
More specifically, the PageRank vector $\mathbf{p}$ coincides with the Katz Centrality score provided that the adjacency matrix $\mathbf{A}$ is replaced by its row-normalized version $\overline{\mathbf{A}}$:
\begin{equation}
\mathbf{p} = \left(1 - \alpha\right) \sum_{k=0}^{+\infty} \alpha^i \overline{\mathbf{A}}^i \times \mathbf{1}
\end{equation} 
Here, the parameter $\alpha$ is the so-called PageRank {\em damping factor}.}
Let us now concentrate on the exponential potential gain. We rewrite Equation \ref{eqn:exponential-potential-gain} as follows:

\begin{dmath}
\label{eqn:exponential-potential-series}
\mathbf{e} = \left(\mathbf{A} + \mathbf{A}^2 + \ldots + \frac{1}{\left(k -1 \right) !} \mathbf{A}^k + \ldots\right) \times \mathbf{1}
= \mathbf{A} \times \left(\mathbf{I} + \mathbf{A} + \ldots + \frac{1}{k !} \mathbf{A}^k + \ldots\right) \times \mathbf{1} 
= \mathbf{A} \times \exp(\mathbf{A}) \times \mathbf{1}
\end{dmath}

\noindent
where $\exp(\mathbf{A}) = \sum_{k=1}^{+\infty} \frac{1}{k!} \mathbf{A}^k$ is the exponential of $\mathbf{A}$ \cite{higham2008functions}. 

The exponential of a matrix has been used to introduce other centrality scores such as {\em communicability} or {\em subgraph centrality} \cite{estrada2012physics,benzi2014matrix}. 

Specifically, $\exp\left(\mathbf{A}\right)_{ij}$ measures how easy is to send a unit of flow from a node $i$ to a node $j$ and vice versa.
Such a parameter is known as {\em communicability} and it can be regarded as a measure of similarity between a pair of nodes. 
Communicability has been successfully used to discover communities in networks \cite{estrada2012physics}. 
The product $\exp(\mathbf{A}) \times \mathbf{1}$ yields a centrality metric which defines the importance of a node as function of its ability to communicate with all other nodes in the network.
In turn, the diagonal entry $\exp\left(\mathbf{A}\right)_{ii}$ of the matrix exponential defines a further centrality metric called {\em subgraph centrality} \cite{estrada2005subgraph}. 
As a result of the rewriting above, we clearly see the exponential potential gain as dependent on two factors: popularity of $i$ (i.e., its degree) and similarity of $i$ with all other nodes in the network.

The computation of the geometric (resp., exponential) potential gain for {\em all nodes} in $\G$ needs the specification of the full adjacency matrix $\mathbf{A}$; in this sense, the geometric and the exponential potential gain should be considered as {\em global centrality metrics}, on par with the Katz centrality score and subgraph centrality.

\subsection{The relation between the geometric and the exponential potential gain}
\label{sub:relation-geometric-exponential}

In this section we present some guidelines on how to choose the $\delta$ factor discussed in the previous section. 

A straightforward choice would be to set $\delta = (2\lambda_1)^{-1}$ as in \cite{katz1953new} or, in analogy with the Google PageRank damping factor, $\delta =0.85\lambda_1^{-1}$ \cite{benzi2013total}. On the other hand, Foster et al. \cite{foster2001faster} suggested the following:

$$
\delta = \frac{1}{\norm{\mathbf{A}}_{\infty} + 1}
$$

\noindent 
where $\norm{\mathbf{A}}_{\infty} = \max_{1 \leq i \leq n} \sum_{j= 1}^{n} \vert \mathbf{A}_{ij}\vert$.

It is instructive to investigate the existence of a  {\em crossover point} $\delta^{c}$, i.e. to discover a value of $\delta$ at which the geometric and the exponential gain of a node $i$ coincide. To this end, we provide the following result.

\begin{theorem}
Let $\G$ be a graph with adjacency matrix $\mathbf{A}$ and eigenvalues $\lambda_1 \geq \lambda_2 \geq \dots \geq \lambda_n$. 
For each node $i$ and for $\delta \in (0, \lambda_1^{-1})$, the geometric and the exponential gains of $i$ coincide if and only if one of the following holds:

\begin{enumerate}
	\item $\lambda_i = 0$, or

	\item $\delta = \delta^{c} = \frac{e^{\lambda_i} - 1}{\lambda_ie^{\lambda_i}}$, provided that $\delta^{c} < \lambda_1^{-1}$.
\end{enumerate}
\end{theorem}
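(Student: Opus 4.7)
The plan is to reduce the entry-wise comparison $g_i=e_i$ to a spectral-mode condition via the orthonormal eigendecomposition $\mathbf{A}=\sum_{k=1}^n \lambda_k \mathbf{v}_k\mathbf{v}_k^T$. Because both $(\mathbf{I}-\delta\mathbf{A})^{-1}$ (whose Neumann expansion converges since $\delta<\lambda_1^{-1}$) and $\exp(\mathbf{A})$ are analytic functions of $\mathbf{A}$, they diagonalise in the same eigenbasis. Starting from the closed forms in (\ref{eqn:geometric-potential-series}) and (\ref{eqn:exponential-potential-series}) and writing $\alpha_k=\mathbf{v}_k^T\mathbf{1}$, I obtain the spectral representations
\begin{equation*}
\mathbf{g}=\sum_{k=1}^n \frac{\lambda_k\alpha_k}{1-\delta\lambda_k}\,\mathbf{v}_k,\qquad \mathbf{e}=\sum_{k=1}^n \lambda_k e^{\lambda_k}\alpha_k\,\mathbf{v}_k.
\end{equation*}

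Next I would isolate the $i$-th spectral mode by pairing both vectors with $\mathbf{v}_i$: orthonormality yields $\mathbf{v}_i^T\mathbf{g}-\mathbf{v}_i^T\mathbf{e}=\lambda_i\alpha_i\bigl[(1-\delta\lambda_i)^{-1}-e^{\lambda_i}\bigr]$. Under the implicit nondegeneracy $\alpha_i\neq 0$ (otherwise $\mathbf{1}$ carries no weight in the $i$-th mode and equality is trivial for that mode), this expression vanishes iff either $\lambda_i=0$ or $(1-\delta\lambda_i)^{-1}=e^{\lambda_i}$. Solving the latter algebraically gives $1-\delta\lambda_i=e^{-\lambda_i}$, hence
\begin{equation*}
\delta=\frac{1-e^{-\lambda_i}}{\lambda_i}=\frac{e^{\lambda_i}-1}{\lambda_i e^{\lambda_i}},
\end{equation*}
which is exactly $\delta^{c}$. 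The proviso $\delta^{c}<\lambda_1^{-1}$ is recorded as a side condition ensuring this candidate value lies in the admissible Neumann window $(0,\lambda_1^{-1})$; a short monotonicity check on $\lambda\mapsto(1-e^{-\lambda})/\lambda$ on $(0,\lambda_1]$ clarifies when the proviso can fail (e.g.\ for $\lambda_i$ close to $\lambda_1$).

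The main obstacle is bridging ``gain of node $i$'' (the $i$-th entry $g_i$ or $e_i$, in which all eigenmodes mix through the weights $v_{k,i}$) to the per-mode dichotomy obtained above: entry-wise equality at a single position is, a priori, only one linear relation among the $n$ spectral terms. The cleanest way to recover the stated characterisation is to read the comparison modewise, via the projection $\mathbf{v}_i^T(\cdot)$ described in the previous step, which rigorously decouples the spectral modes and identifies the $i$-th mode's contribution to the gain with the $i$-th eigencoefficient. With this identification, the dichotomy ``$\lambda_i=0$ or $\delta=\delta^{c}$'' emerges directly from the single scalar equation $\lambda_i\alpha_i\bigl[(1-\delta\lambda_i)^{-1}-e^{\lambda_i}\bigr]=0$, and the admissibility condition $\delta^{c}<\lambda_1^{-1}$ closes the ``if'' direction by certifying that the convergence hypothesis on $\delta$ is not violated by the candidate value.
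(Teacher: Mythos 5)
Your proposal is correct and follows essentially the same route as the paper: diagonalise $\mathbf{A}$, apply $f_1(x)=x/(1-\delta x)$ and $f_2(x)=xe^{x}$ spectrally, and reduce the comparison of the two gains to the scalar equation $\lambda_i\bigl[(1-\delta\lambda_i)^{-1}-e^{\lambda_i}\bigr]=0$, whose solution gives $\delta^{c}$. If anything, you are more careful than the paper, which silently identifies the node index with the eigenmode index and writes the relevant coefficient as $\mathbf{u}_i^{T}\mathbf{u}_i=1$ where the projection onto $\mathbf{1}$ should appear --- exactly the mode-mixing obstacle and the nondegeneracy condition $\alpha_i=\mathbf{v}_i^{T}\mathbf{1}\neq 0$ that you flag explicitly.
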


\begin{proof}
Recall that for sufficiently large values of $k$, we can approximate the geometric and the exponential gain as follows:

\[
\mathbf{g} = \mathbf{A} \times \left(\mathbf{I} - \delta \mathbf{A}\right)^{-1} \times\mathbf{1} \quad \mbox{and} \quad \mathbf{e} = \mathbf{A} \times \exp(\mathbf{A}) \times \mathbf{1}
\]

Recall that $\mathbf{A}$ is a square and symmetric matrix.
Thus, it admits the following eigendecomposition, 

\[
\mathbf{A} = \mathbf{D}^{-1} \times \mathbf{\Lambda} \times \mathbf{D}
\]

where $\mathbf{\Lambda}$ is a diagonal matrix storing the eigenvalues $\lambda_1, \lambda_2, \ldots, \lambda_n$ of $\mathbf{A}$ and $\mathbf{D}$ is an orthonormal matrix, whose columns coincide with the eigenvectors $\mathbf{u}_1, \mathbf{u}_2, \ldots \mathbf{u}_n$ of $\mathbf{A}$.

Now recall \cite{higham2008functions} that, for any function $f$, the matrix $f(\mathbf{A})$ is still diagonalisable and, for any eigenvalue $\lambda_i$ of $\mathbf{A}$ we have that $f(\lambda_i)$ is an eigenvalue of $f(\mathbf{A})$. 
In addition, the matrices $\mathbf{A}$ and $f(\mathbf{A})$ share the same eigenvectors so we have $f(\mathbf{A}) = \mathbf{D}^{-1} \times f(\mathbf{\Lambda}) \times \mathbf{D}$.

Let us consider now the application of the two functions $f_1(x) = \frac{x}{1 - \delta x}$ and $f_2(x) = xe^x$ to matrix $\mathbf{A}$. The eigenvalues of the matrix $f_1(\mathbf{A}) = \mathbf{A} \times \left(\mathbf{I} - \delta \mathbf{A}\right)^{-1}$ are

\begin{equation}
\label{eqn:lambda_gpg}
\frac{\lambda_1}{1 - \delta \lambda_1}, \frac{\lambda_2}{1 - \delta \lambda_2}, \ldots, \frac{\lambda_n}{1 - \delta \lambda_n}
\end{equation}

whereas the eigenvalues of the matrix  $f_2(\mathbf{A}) = \mathbf{A} \times \exp(\mathbf{A})$ are 

\begin{equation}
\label{eqn:lambda_epg}
\lambda_1e^{\lambda_1}, \lambda_2e^{\lambda_2}, \ldots, \lambda_ne^{\lambda_n}.
\end{equation}

Let us introduce $\mathbf{\Lambda}_g$ and $\mathbf{\Lambda}_e$, the diagonal matrices storing the eigenvalues of the matrices $\mathbf{A} \times \exp(\mathbf{A})$ and $\mathbf{A} \times \left(\mathbf{I} - \delta \mathbf{A}\right)^{-1}$, respectively. 
Let us now compute the difference between the geometric and potential gain:

\begin{dmath}
\label{eqn:difference-geometric-potential}
\mathbf{g} - \mathbf{e} = \mathbf{A} \times \left(\mathbf{I} - \delta \mathbf{A}\right)^{-1} \times\mathbf{1} - \mathbf{A} \times \exp(\mathbf{A}) \times \mathbf{1} = \mathbf{D}^{-1} \times \mathbf{\Lambda_g} \times \mathbf{D} \times \mathbf{1} -\mathbf{D}^{-1} \times \mathbf{\Lambda_e} \times \mathbf{D} \times \mathbf{1} =
\left(\mathbf{D}^{-1} \times \left(\mathbf{\Lambda_g} - \mathbf{\Lambda_e}\right) \times \mathbf{D}\right) \times \mathbf{1}
\end{dmath}

We focus on the $i$-th component of vector $\mathbf{g} - \mathbf{e}$ and observe that its value $\Delta_i$ is given as:

\[
\Delta_i = \left(\lambda_i e^{\lambda_i} - \frac{\lambda_i}{1 - \delta \lambda_i}\right) \mathbf{u}_i^T\mathbf{u}_i =  \left(\lambda_i e^{\lambda_i} - \frac{\lambda_i}{1 - \delta \lambda_i}\right)
\]

Here we used the fact that eigenvectors of $\mathbf{A}$ form an orthonormal basis.
If we assume that $\lambda_ i \neq 0$, then $\Delta_i = 0$ if and only if:

\begin{equation}
\label{eqn:delta-crossover}
\delta = \frac{e^{\lambda_i} - 1}{\lambda_ie^{\lambda_i}}
\end{equation}

\noindent
which completes the proof.
\end{proof}

\subsection{Calculation of geometric and exponential potential Gains}
\label{sub:fast-calculation-geom-expon}

In this section we present our algorithm for the computation of the geometric potential and exponential potential gain. 

\blue{Our algorithm can be implemented in few lines of code in any high-level programming language, since it applies the expansion series provided in Equations \ref{eqn:geometric-potential-series} and \ref{eqn:exponential-potential-series}.
Our approach provides insight on how the walk length $k$ affects the calculation of the geometric (resp., exponential) potential gain: indeed, if we stop the expansion of Equation \ref{eqn:geometric-potential-series} (resp. Eq. \ref{eqn:exponential-potential-series}) after the first $k$ terms, then, we would only include the walks up to length $k$ in the calculation of the geometric (resp., exponential) potential gain.}

Let us consider the computational complexity of our solution.
\blue{We begin with the} geometric potential gain and assume that we stop expanding the Neumann series after generating walks of length $k^{\star}$.
In such a case, it is easy to see that cost will be in $O(k^{\star}\vert E\vert)$. 
\blue{In fact, for any $j$ such that $1 < j < k^{\star}$, let us set 
$\mathbf{y}_j = \delta^{j-1}\mathbf{A}^{j} \times \mathbf{1}$ and 
suppose that we have stored the sequence $\mathcal{Y} = \{\mathbf{y}_1, \mathbf{y}_2, \ldots, \mathbf{y}_{j-1}\}$, with $\mathbf{y}_1 = \mathbf{1}$, $\mathbf{y}_2 = \mathbf{A} \times \mathbf{1}$.}   

\noindent
Hence, the following recurrence holds:
 
\begin{dmath}
\label{eqn:updates-geometric-potential-gain}
\mathbf{y}_j
= 
\delta^{j-1} \mathbf{A}^{j}\times \mathbf{1} 
= 
\left(\delta \mathbf{A}\right) \times \left(\delta^{j-2} \mathbf{A}^{j-1}\times \mathbf{1}\right)
= 
\left(\delta \mathbf{A}\right) \times \mathbf{y}_{j-1}
\end{dmath}

The last equality states that any term $\mathbf{y}_j$ can be calculated as the product of a sparse matrix ($\delta \mathbf{A}$) by a vector ($\mathbf{y}_{j-1}$), already computed in the previous iteration. 
Such an operation takes $O(\vert E \vert)$ steps which, in the case of sparse networks, is $O(n)$.

Similarly, given that $\mathbf{g}$ can be expressed as $\mathbf{g} \simeq \sum_{j=0}^{k^{\star}} \mathbf{y}_j$, we conclude that the cost required to compute the geometric potential gain amounts is $O(k^{\star}n)$.
As for space complexity, the cost for computing $\mathbf{g}$ is $O(\vert E\vert)$.

The computation of the geometric potential gain requires to fix $\delta$ beforehand, which, in turn, requires to fix an approximation of the spectral radius $\lambda_1$. 
The literature on the estimation of $\lambda_1$ provides some bounds on it \cite{das2004some,stevanovic2015spectral} but, available upper bounds are often not tight and, thus, uninformative; therefore, an alternate way to approximate $\lambda_1$ is to rely on algorithms such as the {\em Power Iteration Method} \cite{heath2002scientific}. 
On the other hand, if we target very large graphs, {\em sampling techniques} seem the best option \cite{han2017closed}.

Analogous results for both time and space complexity hold for the computation of the exponential potential gain as we show next. 
Define a sequence $\mathcal{Z} = \{\mathbf{z}_i\}$ recursively as follows:

\begin{align*}
\mathbf{z}_1 = \mathbf{1} \\
\mathbf{z}_2 = \mathbf{A} \times \mathbf{1}\\
\dots\\
\mathbf{z}_i = \frac{1}{i -2}\mathbf{A}\times\mathbf{z}_{i-1}
\end{align*}

Therefore, any term $\mathbf{z}_j$ can be calculated as the product of a sparse matrix ($\mathbf{A}$) by a vector ($\mathbf{z}_{j-1}$), which has been already computed in the previous iteration. 
Such an operation takes $O(\vert E \vert)$.

Given that $\mathbf{e}$ can be expressed as $\mathbf{e} \simeq \sum_{j=0}^{k^{\star}} \mathbf{z}_j$, we can conclude that the (worst-case) time complexity for the calculation of the exponential potential gain is $O(k^{\star}\vert E \vert)$; similarly the space complexity is $O(\vert E \vert)$, hence for sparse graphs both time and space complexity reduce to $O(n).$

\section{Methods for computing the Geometric and Exponential Potential Gain}
\label{sec:methods}

\blue{In this section we prove that our algorithm to calculate the geometric and the exponential potential gain is convergent, and we provide an upper bound on the rate of convergence.}

Let $\mathbf{g}$ be the true value of the geometric potential gain and let $\mathbf{g}_k$ be the approximate value of geometric potential gain we would obtain by considering walks up to length $k$.
We wish to estimate:

\begin{equation}
\label{eqn:varepsilon-geometric}
\varepsilon_g(k) = \frac{\norm{\mathbf{g} - \mathbf{g}_k}}{\norm{\mathbf{g}}}
\end{equation}

\noindent
Analogously, the approximation error associated with the calculation of the exponential potential gain is given by

\begin{equation}
\label{eqn:varepsilon-exponential}
\varepsilon_e(k) = \frac{\norm{\mathbf{e} - \mathbf{e}_k}}{\norm{\mathbf{e}}}
\end{equation}

The evaluation of $\varepsilon_g(k)$ and $\varepsilon_e(k)$ requires us to evaluate the {\em norm} of some matrices; here we will rely on the $L_2$ {\em matrix norm} (also known as {\em the spectral norm}), which, in case of symmetric matrices coincides exactly with $\lambda_1$ \cite{heath2002scientific}. 

Since all matrix norms defined over a space of finite-dimension matrices are equivalent, our results generalize to other matrix norms; the only requirement is that the {\em sub-multiplicative property} holds, i.e., $\norm{\mathbf{X}\times \mathbf{Y}} \leq \norm{\mathbf{X}}\norm{\mathbf{Y}}$ for any pair of matrices $\mathbf{X}$ and $\mathbf{Y}$.

\subsection{Rate of convergence for the geometric potential gain}
\label{sub:rate-convergence-geometric}

Regarding the assessment of $\varepsilon_g(k)$, we note that the source of error in approximating the geometric potential gain depends on the early stopping of the Neumann series, i.e., on the approximation:

\begin{equation}
\label{eqn:early-stopping-neuman}
\left(\mathbf{I} -\delta\mathbf{A}\right)^{-1} \simeq \mathbf{I} + \delta\mathbf{A} + \delta^2\mathbf{A}^2 + \ldots
\delta^k\mathbf{A}^k  
\end{equation}
 
\noindent
Now, if we set $\mathbf{S}_k = \sum_{i=0}^{k} \delta^i\mathbf{A}^i$, the error $\varepsilon_g(k)$ depends on:

\begin{dmath}
\left( \mathbf{I} -\delta\mathbf{A}\right)^{-1} - \mathbf{S}_k = \left(\delta\mathbf{A}\right)^k +
\left(\delta\mathbf{A}\right)^{k+1} + \ldots = \left(\delta\mathbf{A}\right)^k \times\left(\mathbf{I} + \delta\mathbf{A} + \ldots \right) =
 \left(\delta\mathbf{A}\right)^k\left( \mathbf{I} -\delta\mathbf{A}\right)^{-1} 
\end{dmath}

\noindent
As $k \to \infty$ we obtain:

\begin{dmath}
\norm{\left( \mathbf{I} -\delta\mathbf{A}\right)^{-1} - \mathbf{S}_k}^{\frac{1}{k}} = \norm{\left(\delta\mathbf{A}\right)^k\left( \mathbf{I} -\delta\mathbf{A}\right)^{-1}}^{\frac{1}{k}} \leq
\norm{\delta^k\mathbf{A}^k}^{\frac{1}{k}}\norm{\left( \mathbf{I} -\delta\mathbf{A}\right)^{-1}}^{\frac{1}{k}}
\end{dmath}

Moreover, as $k \to +\infty$, $\norm{\delta^k\mathbf{A}^k}^{\frac{1}{k}}$ converges to $\lambda_1$ \cite{cvetkovic1997eigenspaces}.
This result, however, is rather weak as it does not give us a realistic estimation of the number of iterations that are required to assure that $\varepsilon_g(k) \le \varepsilon$, for any $\varepsilon > 0$.

A more refined estimation of the rate of convergence of $\varepsilon_g(k)$ that applies to the general case of square complex matrices is due to Young \cite{young1981rate}, who provides a bound of the form $O(\lambda_1^{k-n} k^n)$; it depends on $\lambda_1$, on the number \textit{k} of iterations and, finally, on the size $n$ of $\mathbf{A}$.

Since we are dealing with symmetric matrices, we can derive simpler bounds that are independent of the matrix size, as proved below.

\begin{theorem}
\label{the:convergence-geometric}
Let $\G$ be a graph with adjacency matrix $\A$ and let $\lambda_1$ be its spectral radius; 
also let $\delta \in \left(0, \lambda_1^{-1}\right)$. 
Then $\varepsilon_g(k) \to 0$ with convergence rate $\left(\delta\lambda_1\right)^k$.
\end{theorem}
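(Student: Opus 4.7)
My plan is to reuse the telescoping identity the paper has already established for the tail of the Neumann series, namely $(\mathbf{I}-\delta\mathbf{A})^{-1}-\mathbf{S}_k=(\delta\mathbf{A})^{k+1}(\mathbf{I}-\delta\mathbf{A})^{-1}$, and then convert the matrix-norm bound on this tail into a bound on $\varepsilon_g(k)$ itself. Since $\mathbf{g}=\mathbf{A}\times(\mathbf{I}-\delta\mathbf{A})^{-1}\times\mathbf{1}$ and $\mathbf{g}_k=\mathbf{A}\times\mathbf{S}_{k-1}\times\mathbf{1}$, writing $\mathbf{g}-\mathbf{g}_k=\mathbf{A}\times[(\mathbf{I}-\delta\mathbf{A})^{-1}-\mathbf{S}_{k-1}]\times\mathbf{1}$ and applying sub-multiplicativity of the spectral norm would give
\[
\norm{\mathbf{g}-\mathbf{g}_k}\ \le\ \norm{\mathbf{A}}\cdot\norm{(\delta\mathbf{A})^{k}}\cdot\norm{(\mathbf{I}-\delta\mathbf{A})^{-1}}\cdot\norm{\mathbf{1}}.
\]

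The key step is that symmetry of $\mathbf{A}$ upgrades each of these norm factors to an exact spectral quantity rather than a loose bound. Since the eigenvalues of $\delta\mathbf{A}$ are $\delta\lambda_1\geq\dots\geq\delta\lambda_n$, the spectral norm of $(\delta\mathbf{A})^{k}$ equals $(\delta\lambda_1)^{k}$, and the eigenvalues of $(\mathbf{I}-\delta\mathbf{A})^{-1}$ are $(1-\delta\lambda_i)^{-1}$, whose maximum in absolute value is $(1-\delta\lambda_1)^{-1}$ under the standing assumption $\delta\in(0,\lambda_1^{-1})$. So the product collapses to $\lambda_1\cdot(\delta\lambda_1)^{k}\cdot(1-\delta\lambda_1)^{-1}\cdot\sqrt{n}$, i.e.\ a constant (in $k$) times $(\delta\lambda_1)^{k}$.

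To finish I would divide by $\norm{\mathbf{g}}$, which is a fixed positive quantity independent of $k$ (it is nonzero as long as $\mathbf{A}\times\mathbf{1}\neq 0$, i.e.\ the graph has at least one edge; isolated vertices contribute zero to both sides and can be discarded). This yields
\[
\varepsilon_g(k)\ \le\ C\,(\delta\lambda_1)^{k},\qquad C=\frac{\lambda_1\sqrt{n}}{(1-\delta\lambda_1)\norm{\mathbf{g}}},
\]
and since $\delta\lambda_1<1$ by hypothesis, $\varepsilon_g(k)\to 0$ geometrically with rate $(\delta\lambda_1)^{k}$, as claimed.

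The one step that deserves care is the justification that $\norm{\mathbf{A}^{k}}=\lambda_1^{k}$ in the spectral norm, which relies on $\mathbf{A}$ being symmetric so that $\mathbf{A}^{k}$ is diagonalised by the same orthonormal basis as $\mathbf{A}$, with eigenvalues $\lambda_i^{k}$; this is exactly where the paper's remark about generalising to any sub-multiplicative norm becomes nontrivial, because for non-symmetric matrices one would only have $\norm{\mathbf{A}^{k}}^{1/k}\to\lambda_1$ (Gelfand's formula) and pick up the extra polynomial-in-$k$ factor that Young's bound exhibits. The remainder is routine application of norm inequalities.
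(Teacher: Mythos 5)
Your proposal is correct and follows essentially the same route as the paper: both arguments reduce to the fact that for symmetric $\mathbf{A}$ the spectral norm of $\mathbf{A}^j$ equals $\lambda_1^j$, and both arrive at the bound $\left(\delta\lambda_1\right)^k/(1-\delta\lambda_1)$ for the truncation error of the Neumann series (you via sub-multiplicativity applied to the factored tail $\left(\delta\mathbf{A}\right)^k\left(\mathbf{I}-\delta\mathbf{A}\right)^{-1}$, the paper via a term-by-term triangle inequality followed by summing the geometric series). Your version is in fact slightly more complete, since you carry the matrix bound all the way through to $\varepsilon_g(k)$ by accounting for the factors $\norm{\mathbf{A}}$, $\norm{\mathbf{1}}=\sqrt{n}$ and the division by $\norm{\mathbf{g}}$, a step the paper leaves implicit.
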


\begin{proof}
Recall that matrix $\mathbf{A}$ is square and symmetric thus it admits the following eigendecomposition

\[
\mathbf{A} = \mathbf{D}^{-1} \times \mathbf{\Lambda} \times \mathbf{D}
\]

where $\mathbf{\Lambda}$ is a diagonal matrix storing the eigenvalues of $\mathbf{A}$ and $\mathbf{D}$ is an orthonormal matrix, whose columns coincide with the eigenvectors of $\mathbf{A}$.

In the light of the eigendecomposition of $\mathbf{A}$ we get:

\begin{dmath}
\norm{\left( \mathbf{I} -\delta\mathbf{A}\right)^{-1} - \mathbf{S}_k} 
= \norm{\delta^k\mathbf{A}^k + \delta^{k+1}\mathbf{A}^{k+1} + \ldots} 
\leq \norm{\delta^k\mathbf{A}^k} + \norm{\delta^{k+1}\mathbf{A}^{k+1}} + \ldots = \delta^k\norm{\mathbf{A}^k} + \delta^{k+1}\norm{\mathbf{A}^{k+1}} + \ldots
\end{dmath}

\noindent
which can be further simplified by observing that, for any $j$:

\begin{dmath}
{\norm{\mathbf{A}^j} = \norm{\mathbf{D}^{-1}\mathbf{\Lambda}\mathbf{D}} 
\leq \norm{\mathbf{D}^{-1}}\norm{\mathbf{\Lambda}^j}\norm{\mathbf{D}}}
= {\norm{\mathbf{\Lambda}^j} 
\leq  \norm{\mathbf{\Lambda}}^j  
= \lambda_1^j}
\end{dmath}

Here we used the fact that $\mathbf{D}$ and $\mathbf{D}^{-1}$ are orthonormal so their $L_2$ norm is equal to $1$. 
In addition, $\mathbf{\Lambda}$ has the same spectrum of $\mathbf{A}$ hence its $L_2$ norm coincides with the spectral radius $\lambda_1$ of $\mathbf{A}$.
By putting together the previous results we obtain:

\begin{dmath}
\norm{\left( \mathbf{I} -\delta\mathbf{A}\right)^{-1} - \mathbf{S}_k}  
\leq \delta^k\lambda_1^k + \delta^{k+1}\lambda_1^{k+1} + \ldots 
= \delta^k\lambda_1^k \left( 1 + \delta\lambda_1 + \delta^2\lambda_1^2 + \ldots\right) 
= \delta^k\lambda_1^k \frac{1}{1 - \delta\lambda_1} 
= \left(\delta \lambda_1\right)^k\frac{1}{1 - \delta\lambda_1}
\end{dmath}

\noindent
as required. 

\end{proof}

\subsection{Rate of convergence for the exponential potential gain}
\label{sub:rate-convergence-exponential}

The convergence result obtained with Theorem \ref{the:convergence-geometric} above has a counterpart for the exponential potential gain. 
In Theorem \ref{the:convergence-esponential} below we give an exponential convergence result for the exponential potential gain case.

\begin{theorem}
\label{the:convergence-esponential}
Let $\G$ be a graph with adjacency matrix $\A$ and let $\lambda_1$ be the spectral radius of $\mathbf{A}$. 
If $k$ is at least $2e \lambda_1$ then 

\[
\left(\frac{1}{2}\right)^{2 e \lambda_1} \lambda_1^{-\frac{1}{2}}
\]

\noindent
is an upper bound for $\varepsilon_e(k)$. 
\end{theorem}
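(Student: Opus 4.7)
The plan is to mirror the structure of the proof of Theorem~\ref{the:convergence-geometric}: express the error as the tail of the matrix Taylor series, pass to the spectral norm via sub-multiplicativity, and bound the resulting scalar tail of the exponential series. The difference from the geometric case is that here the decay is driven by factorials rather than a geometric weight, so the estimate will combine Stirling's approximation with a ratio-test argument.

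\medskip

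First I would write the error as
\begin{equation*}
\mathbf{e}-\mathbf{e}_k \;=\; \mathbf{A} \times R_k(\mathbf{A}) \times \mathbf{1},
\qquad R_k(\mathbf{A}) \;=\; \sum_{j=k}^{+\infty}\frac{\mathbf{A}^j}{j!},
\end{equation*}
and use the sub-multiplicativity of the $L_2$ norm together with the identity $\norm{\mathbf{A}^j}\leq \lambda_1^j$ (already established inside the proof of Theorem~\ref{the:convergence-geometric} by diagonalising $\mathbf{A}$) to obtain
\begin{equation*}
\norm{\mathbf{e}-\mathbf{e}_k} \;\leq\; \lambda_1\,\norm{\mathbf{1}}\,\sum_{j=k}^{+\infty}\frac{\lambda_1^j}{j!}.
\end{equation*}

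\medskip

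Next I would control the scalar tail $T_k := \sum_{j\geq k}\lambda_1^j/j!$. The consecutive terms satisfy the ratio $\lambda_1/(j+1)$; as soon as $j+1\geq 2e\lambda_1$ this ratio is bounded by $1/(2e)<1/2$, so the tail is dominated by a geometric series and
\begin{equation*}
T_k \;\leq\; \frac{\lambda_1^k}{k!}\cdot\frac{1}{1-1/(2e)} \;\leq\; 2\,\frac{\lambda_1^k}{k!}.
\end{equation*}
Now Stirling's inequality $k!\geq (k/e)^k\sqrt{2\pi k}$ gives
\begin{equation*}
\frac{\lambda_1^k}{k!} \;\leq\; \frac{1}{\sqrt{2\pi k}}\left(\frac{e\lambda_1}{k}\right)^k.
\end{equation*}
Under the hypothesis $k\geq 2e\lambda_1$ we have $e\lambda_1/k\leq 1/2$ and therefore $(e\lambda_1/k)^k\leq (1/2)^k\leq (1/2)^{2e\lambda_1}$, while $\sqrt{2\pi k}\geq\sqrt{4\pi e\,\lambda_1}$. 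Combining:
\begin{equation*}
\frac{\lambda_1^k}{k!}\;\leq\;\frac{1}{\sqrt{4\pi e\,\lambda_1}}\,\left(\frac{1}{2}\right)^{2e\lambda_1},
\end{equation*}
which, after absorbing the constants $2$, $\lambda_1$ and $\norm{\mathbf{1}}$ into the estimate, yields an upper bound on $\norm{\mathbf{e}-\mathbf{e}_k}$ of the shape $C\,(1/2)^{2e\lambda_1}\lambda_1^{-1/2}$.

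\medskip

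The last step is to divide by $\norm{\mathbf{e}}$. Here I would argue that $\mathbf{e}=\mathbf{A}\exp(\mathbf{A})\mathbf{1}$ is entrywise non-negative (since $\mathbf{A}$ is a $0/1$ matrix and $\exp(\mathbf{A})$ has non-negative entries), so $\norm{\mathbf{e}}\geq \norm{\mathbf{A}\mathbf{1}}$, which in turn can be bounded from below using $\norm{\mathbf{1}}$ and a trivial connectivity argument. This is essentially where the constant that achieves the stated bound is absorbed, and it is also where I expect the main obstacle to lie: the bound claimed in the theorem has no explicit dependence on $n$ and no leading multiplicative constant, whereas the natural estimates above produce both $\norm{\mathbf{1}}=\sqrt{n}$ factors and a constant coming from Stirling and from the geometric tail. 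The delicate part of the argument is to show that these factors cancel (or are dominated) when one forms the ratio defining $\varepsilon_e(k)$, so that only $(1/2)^{2e\lambda_1}\lambda_1^{-1/2}$ survives. Everything else is routine manipulation of series and eigendecompositions, parallel to Theorem~\ref{the:convergence-geometric}.
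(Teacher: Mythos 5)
Your proposal follows essentially the same route as the paper's proof: write the error as the tail $\sum_{j>k}\mathbf{A}^j/j!$, pass to the spectral norm via sub-multiplicativity and $\norm{\mathbf{A}^j}\le\lambda_1^j$, dominate the factorial tail by a geometric series with ratio $\lambda_1/(k+1)$, and finish with Stirling's formula at $k\simeq 2e\lambda_1$ to extract the factor $\left(\tfrac{1}{2}\right)^{2e\lambda_1}\lambda_1^{-1/2}$. The one point where you diverge is that you explicitly flag the final normalisation --- the $\norm{\mathbf{1}}=\sqrt{n}$ and $\norm{\mathbf{A}}=\lambda_1$ prefactors, the constant from the geometric tail, and the division by $\norm{\mathbf{e}}$ --- as an unresolved obstacle; be aware that the paper's own proof does not resolve it either, since it stops after estimating $\lambda_1^{k}/k!$ and declares the proof complete, so you have not missed any idea the paper actually supplies, and your attempt is if anything more candid about where the argument is loose.
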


\begin{proof}

Thanks to Equation \ref{eqn:exponential-potential-gain} we can define:

\[
\mathbf{S} = \sum_{i=1}^{+\infty} \frac{1}{i!} \mathbf{A}^i \quad \text{and} \quad \mathbf{S}_k = 
\sum_{i=1}^{k} \frac{1}{i!} \mathbf{A}^i
\]

\noindent
Next, we exploit the sub-multiplicativity property of the $L_2$ norm to obtain:

\begin{dmath}
\label{eq:error}
\norm{\mathbf{e} - \mathbf{e_k}} 
= \norm{\mathbf{A} \times \mathbf{S} \times \mathbf{1} - \mathbf{A} \times \mathbf{S_k}\times \mathbf{1}} 
= \norm{\mathbf{A} \times \left(\mathbf{S} -  \mathbf{S_k}\right)\times \mathbf{1}} 
\leq \norm{\mathbf{A}}\norm{\mathbf{R_k}}\norm{\mathbf{1}}
\end{dmath}

\noindent
with $\mathbf{R_k} = \mathbf{S} - \mathbf{S_k}$.
Also, by repeated application of the triangle inequality we obtain:

\begin{dmath}
\label{eqn:varepsilon}
\norm{\mathbf{R_k}} 
= \norm{\frac{\mathbf{A}^{k+1}}{\left(k+1\right)!} + \frac{\mathbf{A}^{k+2}}{\left(k+2\right)!} + \ldots}  
\leq \frac{\norm{\mathbf{A}^{k+1}}}{\left(k + 1\right)!} + \frac{\norm{\mathbf{A}^{k+2}}}{\left(k + 2\right)!} + \ldots
\end{dmath}

Recall that $\G$ is undirected so its adjacency matrix $\mathbf{A}$ is symmetric.
Thus, the $L_2$ norm of $\mathbf{A}$ coincides with its spectral radius. 
Also, for any $r \in \mathbb{N}$, $\mathbf{A}^r$ is still symmetric and, due to the sub-multiplicativity of the $L_2$ norm, we get $\norm{\mathbf{A}^r} \leq \lambda_1^r$, which allows us to simplify Equation \ref{eqn:varepsilon} as follows:

\begin{multline}
\norm{\mathbf{R_k}} \leq \frac{\lambda_1^{\left(k+1\right)}}{(k+1)!} +
\frac{\lambda_1^{\left(k+2\right)}}{(k+2)!} + \ldots =\\
= \frac{\lambda_1^{\left(k\right)}}{k!}\left[  1 + \frac{\lambda_1}{k+1} +  \frac{\lambda_1^2}{k(k+1)} +
\frac{\lambda_1^3}{k(k+1)(k+2)} + \ldots \right] \\\leq
\frac{\lambda_1^{\left( k \right)}}{k!}\left[  1 + \frac{\lambda_1}{k+1} +  \frac{\lambda_1^2}{(k+1)^2} +
\frac{\lambda_1^3}{(k+1)^3} + \ldots \right] = \\
= \frac{\lambda_1^{k}}{k!} \sum_{r=0}^{+\infty} \left(\frac{\lambda_1}{k+1}\right)^r
\end{multline}

Now, since $k + 1 > k \simeq 2e\lambda_1 > \lambda_1$, we have that $\sum_{r=0}^{+\infty} \left(\frac{\lambda_1}{k+1}\right)^r$ converges to the constant value $\frac{k+1}{k+1 - \lambda_1} \simeq \frac{2e}{2e - 1}$. 

The final step corresponds to applying Stirling's formula \cite{knuth1989concrete}, which states that, for sufficiently large values of $k$, $k! \simeq \sqrt{2\pi k} \left(\frac{k}{e}\right)^k$, which implies $k! = 2\sqrt{\pi} \lambda_1^{\frac{1}{2}} \left( 2\lambda_1\right)^{2e\lambda_1}$ if $k \simeq 2e \lambda_1$. 
Therefore, after some simplifications, we obtain

\[
\frac{\lambda_1^{k}}{k!} \simeq \frac{1}{2\sqrt{ \pi}}\left(\frac{1}{2}\right)^{2 e \lambda_1} \lambda_1^{-\frac{1}{2}}.
\]

\noindent
which completes the proof.

\end{proof}

\subsection{Computational Analysis}
\label{sub:computational}

\blue{Many previous studies focused on the problem of efficiently calculating the product $F(\mathbf{A}) \times \mathbf{b}$ where $\mathbf{A} \in \mathbb{R}^{n \times n}$, $\mathbf{b} \in \mathbb{R}^n$ and $F(\cdot)$ is an arbitrary function defined over the spectrum of $\mathbf{A}$ \cite{popolizio2008acceleration}.  \\
In many applications $\mathbf{A}$ is large and, thus, it is computationally prohibitive to first compute $F(\mathbf{A})$ and, then, to form the product $F(\mathbf{A}) \times \mathbf{b}$. \\
A clever strategy consists of projecting $\mathbf{A}$ (resp., $\mathbf{b}$) onto a matrix $\mathbf{H}$ (resp., a vector $\mathbf{w}$) of size $r\times r$ (resp., $r$) belonging to a subspace $\Omega$ such that $r$ is much smaller than $n$: in this way, we estimate the product $F(\mathbf{A}) \times \mathbf{b}$ as   
$F(\mathbf{H}) \times \mathbf{w}$ which is much easier to compute.\\
The task of projecting $\mathbf{A}$ and $\mathbf{b}$ onto $\mathbf{H}$ and $\mathbf{w}$ is equivalent to constructing an orthonormal matrix $\mathbf{V} = [\mathbf{v}_1, \mathbf{v}_2, \ldots, \mathbf{v}_r]$ whose columns span $\Omega$. 
If such a matrix $\mathbf{V}$ is available, then $\mathbf{H}$ is defined as
$\mathbf{H} = \mathbf{V}^T\mathbf{A}\mathbf{V}$ and the vector $\mathbf{b}$ is mapped onto the vector $\mathbf{w} = \mathbf{V}^T \times \mathbf{b}$.\\
The procedure for calculating $\mathbf{V}$ depends on both the spectral features of $\mathbf{A}$ as well as on the approximation accuracy we plan to obtain; in practice, the size of $\mathbf{V}$ could be very large and, thus, optimization techniques have been extensively studied to generate  good approximations of $F(\mathbf{A}) \times \mathbf{b}$.\\
One of the most popular techniques is the {\em Krylov subspace} \cite{saad1992analysis,heath2002scientific}
\begin{definition}[Standard Krylov Subspace]
Let $\mathbf{A} \in \mathbb{R}^{n \times n}$ matrix and let $\mathbf{b} \in \mathbb{R}^n$. The {\em standard Krylov subspace} of dimension $t$  associated with $\mathbf{A}$ and $\mathbf{b}$  is defined as the   
$$
\mathcal{K}_t(\mathbf{A}, \mathbf{b})=  \textsl{span} \{\mathbf{b}, \mathbf{A} \times \mathbf{b}, \ldots, \mathbf{A}^{t-1}\times \mathbf{b}\}
$$
\end{definition}
One could construct the standard Krylov subspace $\mathcal{K}_t(\mathbf{A}, \mathbf{b})$ and identify an orthonormal basis for $\mathcal{K}_t(\mathbf{A}, \mathbf{b})$; the vectors in such a basis constitute the columns of the matrix $\mathbf{V}$ we would like to compute.\\
Since $\mathbf{A}$ is symmetric the most efficient technique to find such an orthonormal basis for $\mathcal{K}_t(\mathbf{A}, \mathbf{b})$ is the {\em symmetric Lanczos algorithm} \cite{heath2002scientific}.
Such an algorithm performs $\ell$ iterations and, at each iteration, the most time-expensive step consists of calculating the product of the matrix $\mathbf{A}$ by a vector of size $n$. 
If $\mathbf{A}$ is large but sparse, the cost of each iteration is $O(n)$ and the overall cost is $O(\ell n)$ which is equal to the asymptotic cost of our algorithm.\\ 
A further, computationally-appealing option is to approximate  $F(\cdot)$ by means of a rational function $g(x)$ defined as the ratio of two polynomials, $p_{\nu}(x)$ and $q_{\mu}(x)$, of degree $\nu$ and $\mu$, respectively \cite{guttel2013rational,popolizio2008acceleration}.
In this way, the problem of calculating $F(\mathbf{A}) \times \mathbf{b}$ is equivalent to evaluate $g(\mathbf{A}) \times \mathbf{b}$ which should be hopefully easier to calculate.
If we denote as $\sigma_1, \sigma_2, \ldots \sigma_{\mu}$ the poles of $g(x)$, i.e, the roots of $q_{\mu}(x)$, then we introduce the following definition:
\\
\begin{definition}[Rational Krylov Subspace]
Let $\mathbf{A} \in \mathbb{R}^{n \times n}$ matrix and let $\mathbf{b} \in \mathbb{R}^n$. The {\em rational Krylov subspace} of dimension $t$  associated with $\mathbf{A}$ and $\mathbf{b}$  is defined as the   
$$
\mathcal{K}_t^{\star}(\mathbf{A}, \mathbf{b}) = \textsl{span}\{\mathbf{b}, (\mathbf{A} - \sigma_1 \mathbf{I})^{-1} \times \mathbf{b}, \ldots, (\mathbf{A} - \sigma_t \mathbf{I})^{-1} \times \mathbf{b}\}
$$
\end{definition}
Rational Krylov subspace methods are often more accurate than standard ones but they require to solve, at each iteration, a linear system; such a task is equivalent to computing terms of the form $(\mathbf{A} - \sigma_i \mathbf{I})^{-1} \times \mathbf{b}$ and such a procedure is usually much more expensive than standard Krylov methods.\\
Theorems \ref{the:convergence-geometric} and \ref{the:convergence-esponential}  indicate that our procedure is capable of achieving an exponentially decay error: consequently, we are able to achieve the accuracy we wish with a relatively small number of iterations.}

\section{Experimental validation}
\label{sec:experiments}

In this section we report on the experiments we carried out to assess the effectiveness of geometric and exponential potential gain on real-world datasets.
Our experiments aim at answering the following questions:

\begin{enumerate}
\item[$Q_1:$] How sensitive are our approximations of the geometric and the exponential potential gain w.r.t. the length $k$ of walks?

\item[$Q_2:$] Do our algorithms scale up to real graphs?

\item[$Q_3:$] How do the geometric and exponential potential gain correlate with other, popular, centrality metrics such as Degree, Katz, PageRank and Eigenvector Centrality? \blue{Are the aforementioned centrality metrics good candidates for assessing the navigability of a node in the sense defined in this paper?}
\end{enumerate}

To answer these questions, we considered three large, real datasets, taken from~\cite{kunegis2013konect}, whose features are described in 
Table~\ref{tbl:dataset}.
\blue{The first dataset -- \textsc{Facebook} -- is a sample of the Facebook user connections graph: in it, a node represents a Facebook user and an edge represents a friendship between two users. The second dataset, called \textsc{DBLP}, is a sample of the DBLP computer science bibliography: authors correspond to nodes and two nodes are linked by an edge if the corresponding authors have published at least one paper together. Finally, \textsc{YouTube}, is a sample of the friendship network between YouTube users.}

We implemented our algorithms in Python (with the Scipy module) on a hardware platform with the following features: AMD Ryzen 5 1600 CPU, 16GB RAM and Ubuntu 17.10.   
 
\begin{table}
\centering
\begin{tabular}{clrrrr}
\hline \hline 
& Dataset & Nodes & Edges & {$\lambda_1$} & Year\\
\hline 
1 & Facebook Friendship & $63\,731$ & $817\,035$ & $132.57$ & 2009  \\
2 & DBLP co-authorship & $317\,080$ & $1\,049\,866$ & $115.85$ &  2012 \\
3 & Youtube friendship & $1\,134\,890$ & $2\,987\,624$ & $210.40$ & 2012 \\
\hline
\end{tabular}
\caption{Key features of the datasets employed in our experimental tests. For each dataset we report the number of nodes, the number of edges, the spectral radius and the year in which data were collected.}
~\label{tbl:dataset}
\end{table}

\subsection{The impact of walk length on the approximation of geometric/exponential potential gains}
\label{sub:accuracy-estimation}

The aim of this section is to answer question $\mathbf{Q_1}$ and, specifically, to study the quality of the approximation of the geometric and exponential potential gain in relation to walk length. 
Ideally, one would like geometric (resp., exponential) potential gain values to stabilize already for small values of $k$.
It would mean that, for any node $i$, it suffices to consider nodes located a few hops away from $i$ to get a satisfactory approximation of its geometric (resp., exponential) potential gain.

To perform our study we experimentally studied the decrease of $\varepsilon_g(k)$ and $\varepsilon_e(k)$ as function of $k$.
We start by discussing the results for the geometric potential gain, in Figures \ref{fig:error-facebook}, \ref{fig:error-dblp} and \ref{fig:error-youtube}.

\begin{figure}[!t]
\centering
\includegraphics[width=.6\columnwidth]{./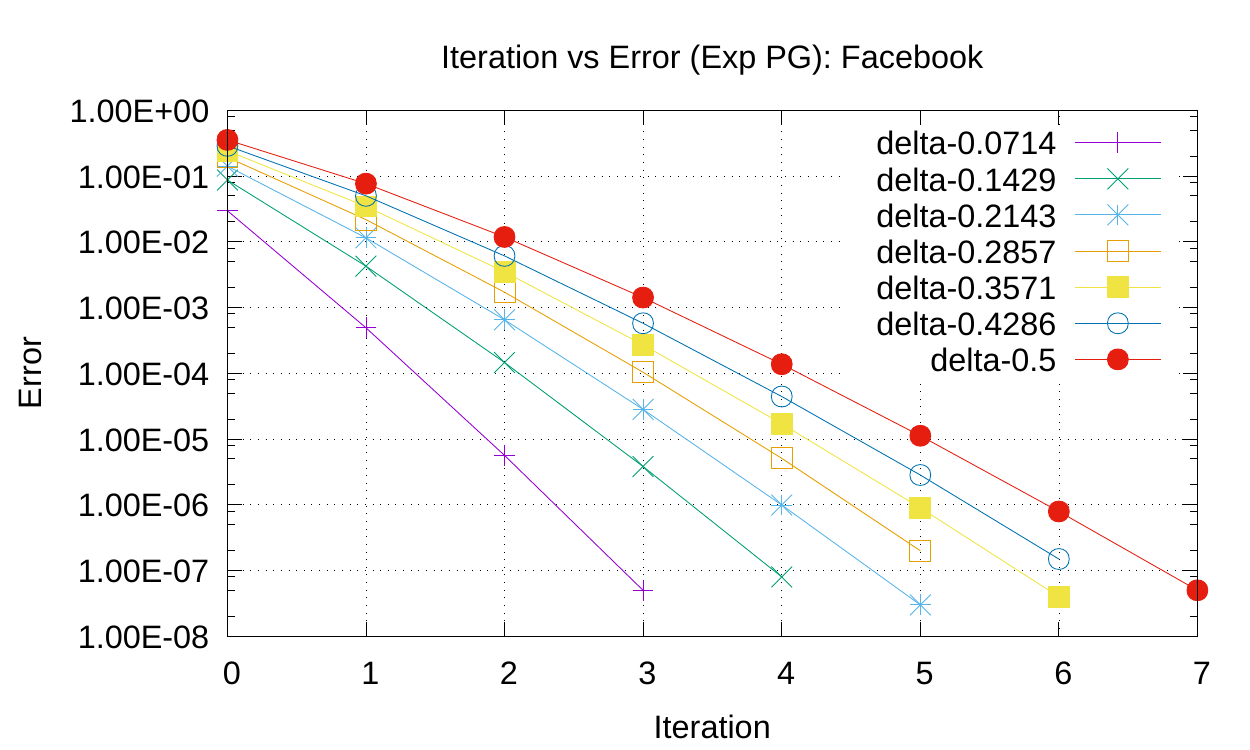}
\caption{Values of $\varepsilon_g(k)$ as a function of $k$ plotted in semi-logarithmic scale for \textsc{Facebook}}
\label{fig:error-facebook}
\end{figure}

\begin{figure}[!t]
\centering
\includegraphics[width=.6\columnwidth]{./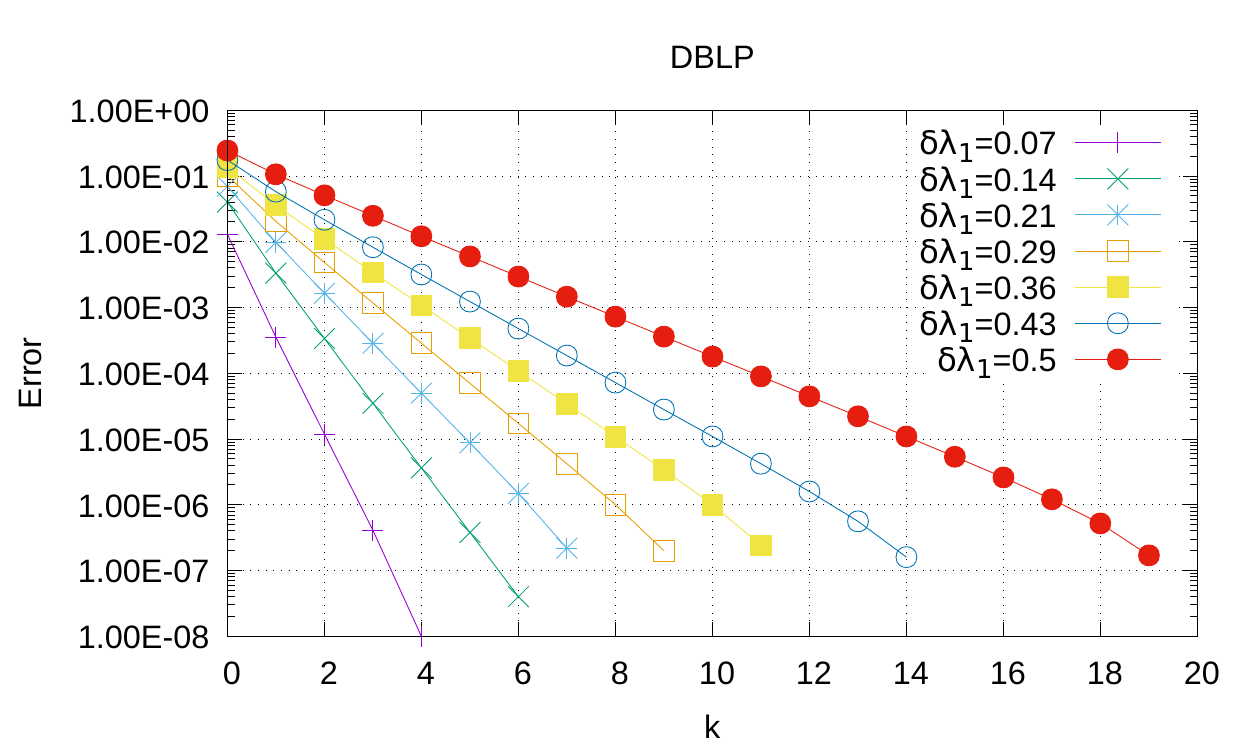}
\caption{Values of $\varepsilon_g(k)$ as a function of $k$ plotted in semi-logarithmic scale for \textsc{DBLP}}
\label{fig:error-dblp}
\end{figure}

\begin{figure}[!t]
\centering
\includegraphics[width=.6\columnwidth]{./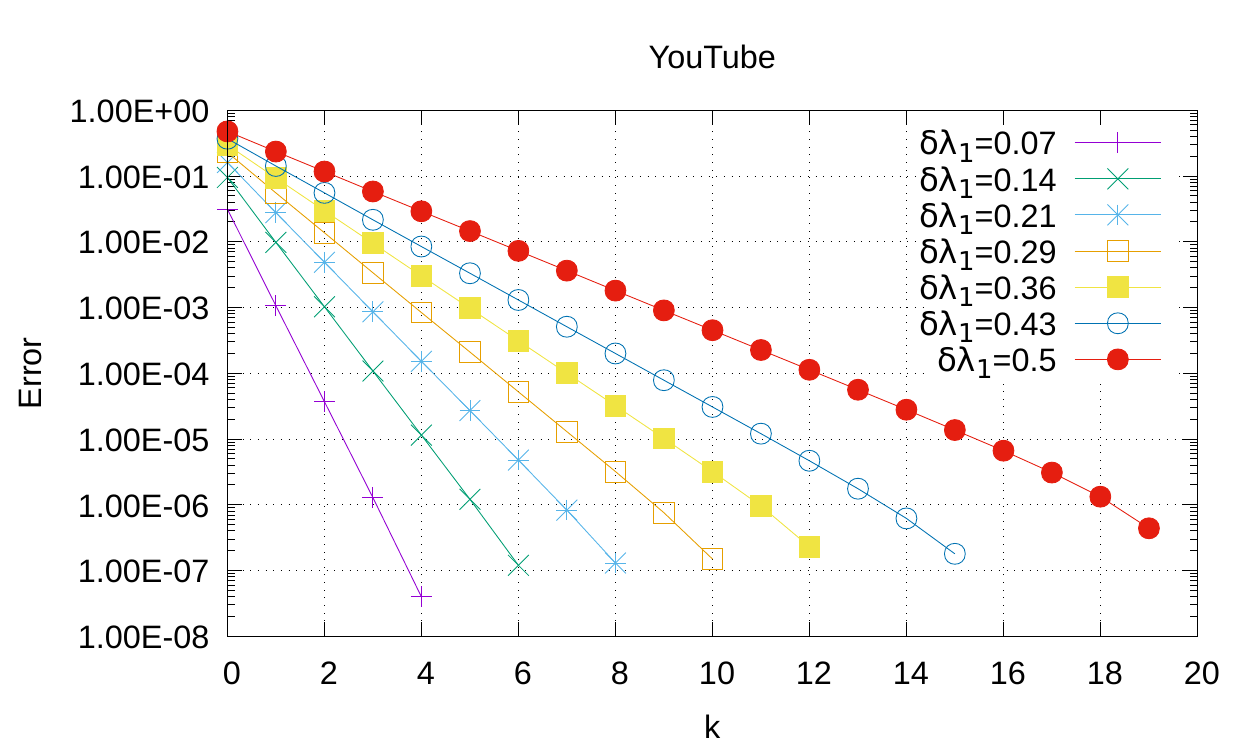}
\caption{Values of $\varepsilon_g(k)$ as a function of $k$ plotted in semi-logarithmic scale for \textsc{YouTube}}
\label{fig:error-youtube}
\end{figure}

A notable feature of our algorithm is that convergence is very fast, independently of the size and nature of the dataset under investigation.
For instance, with the \textsc{YouTube} dataset-- the largest tested here -- walks up to length $k = 20$ are sufficient to achieve $\varepsilon_g(k)$ lesser than $10^{-6}$.

\blue{A further observation is that, for a fixed value of $k$, the larger $\delta$, the slower the convergence of $\varepsilon_g(k)$ to zero, and that our results perfectly agree with the statement of Theorem \ref{the:convergence-geometric}.\\
It is also instructive to study how $\varepsilon_g(k)$ varies across datasets: for any value of $\delta$, $\varepsilon_g(k)$ for \textsc{DBLP} converges to zero faster than it does for \textsc{Facebook}, despite the fact that \textsc{DBLP} is about five times larger than \textsc{Facebook} (see Figures \ref{fig:error-dblp} and \ref{fig:error-youtube}).}
It is also possible to appreciate small differences in the slopes of straight lines plotting $\log \varepsilon_g(k)$ as a function of $k$.

The important finding described above is mirrored by a similar result for the exponential potential gain, as illustrated by Figure \ref{plt:errEpg}.
Once again we notice that graph size has a small impact on the convergence of our algorithm.
From Theorem \ref{the:convergence-geometric}, in fact, $\lambda_1$ is the dominant parameter: the smaller $\lambda_1$, the faster the algorithm converges to the true value of the exponential potential gain. 
Notice also how the exponential potential gain needs walks that are \textit{longer} than those needed for the geometric potential gain: we need walks up to length 169, 189 and 279 for \textsc{DBLP}, \textsc{Facebook} and \textsc{YouTube}, respectively.  

\begin{figure}[htb]
\centering
\includegraphics[width=.7\columnwidth]{./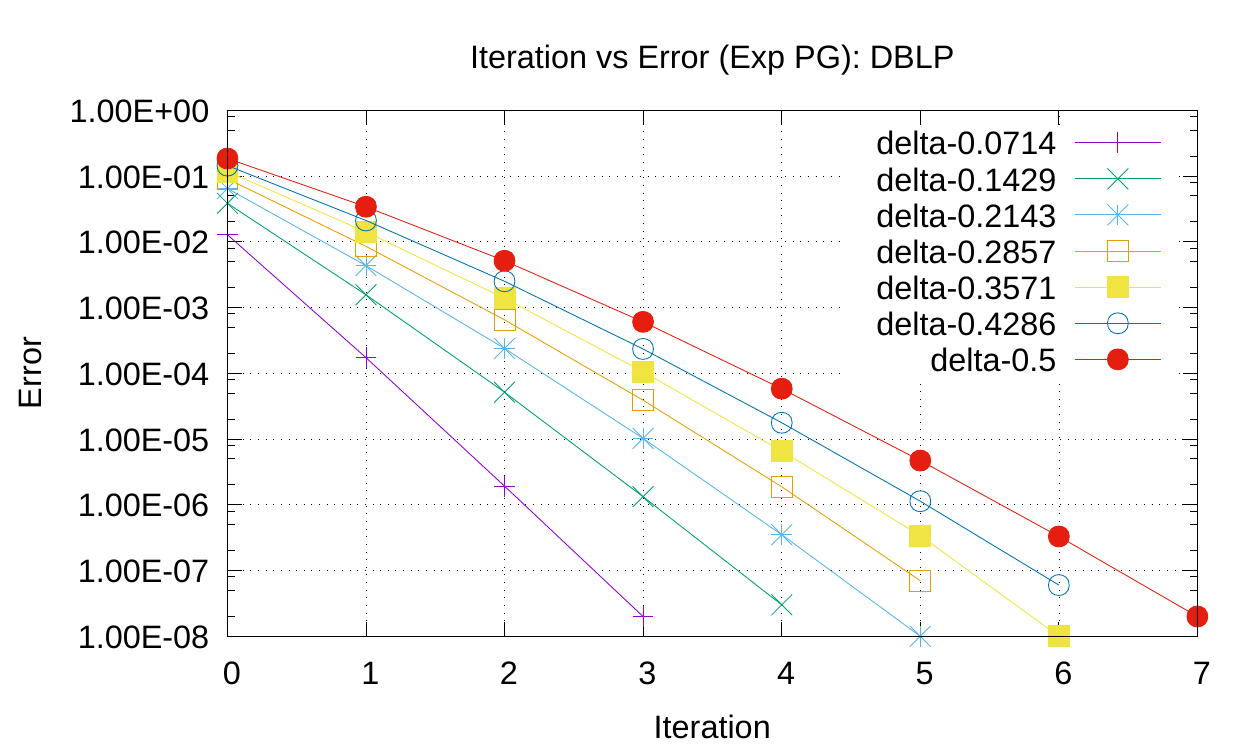}
\caption{Values of $\varepsilon_e(k)$ as a function of $k$ for the \textsc{Facebook}, \textsc{DBLP}, and \textsc{YouTube} datasets plotted in semi-logarithmic scale.}
\label{plt:errEpg}
\end{figure}

The main results of our experimental validation can be now summarized as follows: {\em (i)} For the geometric potential gain, small values of $\delta$ should be used.  Walks of length between 4 and 10 are sufficient to get a very good approximation. {\em (ii)} The time needed to compute the geometric and the exponential potential gain does not depend on the graph size but it only depends on $\lambda_1$: the larger $\lambda_1$ the more dense/connected the graph is and, thus, a larger number of walks is needed to achieve a good approximation. {\em (iii)} Computing the exponential potential gain is slower than computing geometric potential gain and, experimentally, it might require walks whose length is ten times larger.  This finding suggests that we should consider as future work to introduce a decay factor of the form $\frac{\delta^k}{k!}$ to penalize long walks.

\subsection{Scalability Analysis}
\label{sub:scalability-analysis}

In this section we address question $\mathbf{Q_2}$ by studying how our algorithms scale over \textsc{Facebook}, \textsc{DBLP} and \textsc{YouTube}. 
We measured the execution times as a function of the walk length $k$.
In particular, for the geometric potential gain we were concerned with understanding how $\delta$ affected the performance of our approach.
The results we obtained are plotted in Figure \ref{plt:times}.

Clearly, an increase of $\delta$ yields an increase in computational time. 
This is due to the fact that larger values of $\delta$ force our algorithm to explore the graph in more depth. 
Such effect can be clearly seen in Figure \ref{plt:times}; the increase is approximately linear in $\delta$ so we can conclude that the computational impact of increasing $\delta$ is limited.
Of course, larger datasets still require greater computational resources since we will have multiply the adjacency matrix $\mathbf{A}$ by a vector.
However, for sparse adjacency matrices the calculation is still fast, e.g., it takes less than one second for the three datasets considered here.

\begin{figure}
\centering
\includegraphics[width=.7\columnwidth]{./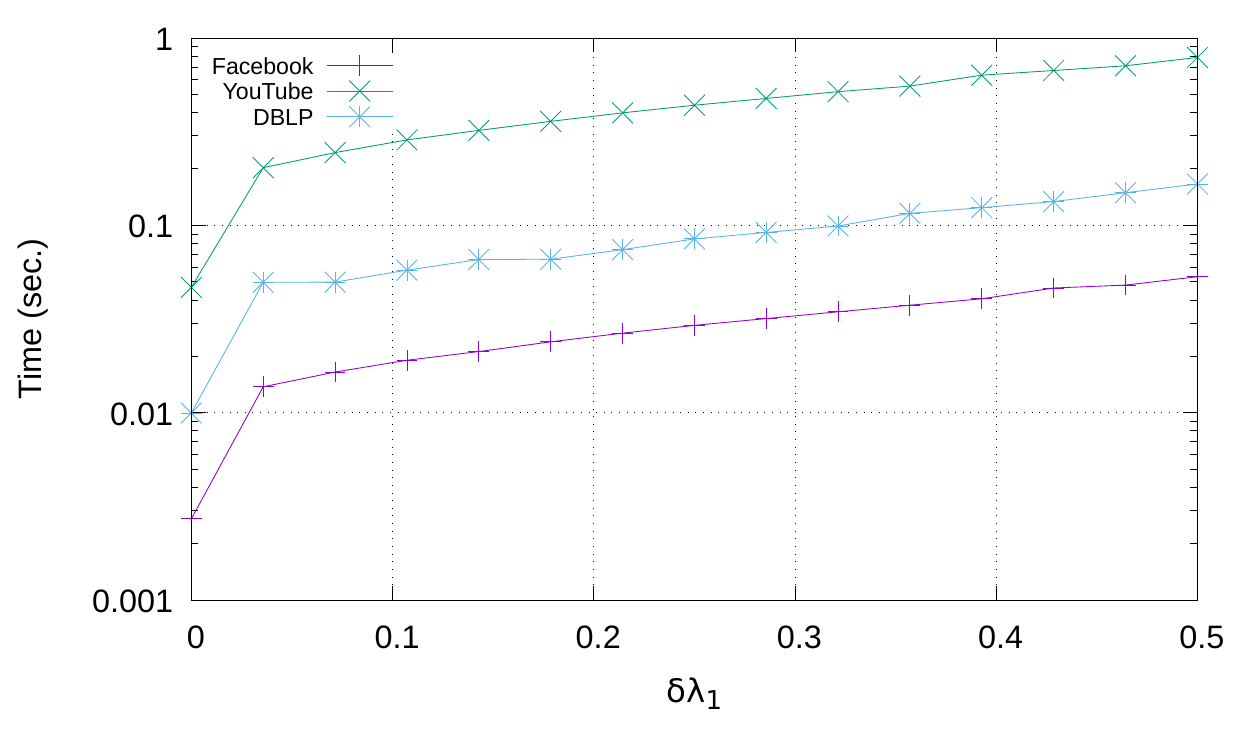}
\caption{Computation times vs. $\delta \lambda_1$ for the geometric potential gain}
\label{plt:times}
\end{figure}

%

Table \ref{tbl:comp-times-epg} reports the computation times of the exponential potential gain. 
Again we may notice how exponential potential gain is computationally more demanding than the  geometric potential gain as we need many more iterations. 
In addition, notice how the computational time for DBLP is about three times slower than for \textsc{Facebook} despite the fact that the latter needed 20 iterations more.
Such difference depends on the difference in size between the two datasets.

\begin{table}[htb]
\centering
\begin{tabular}{lr}
\hline \hline 
Dataset &  Time (sec.) \\
\hline
Facebook & $0.525$ \\
DBLP &  $1.568$ \\ 
YouTube & $11.878$ \\ 
\hline \hline 
\end{tabular}
\caption{Computational times for the exponential potential gain}~\label{tbl:comp-times-epg}
\end{table}

\subsection{Relation with other centrality metrics}
\label{sub:correlation-analysis}

In this section we investigate how the geometric and the exponential potential gain are correlated with some popular centrality metrics. 
Specifically, we compared the Geometric (\emph{GPG}) and the Exponential Potential Gain (\emph{EPG}) with Degree Centrality \emph{(DEG)}, Eigenvector Centrality \emph{(EC),} PageRank \emph{(PR)} and Katz Centrality \emph{(Katz).}
As for \emph{PR,} we fixed the damping factor equal to 0.85. 
For a fair comparison, we choose the same value of $\delta$ for\emph{GPG} and \emph{Katz}.
We used Spearman's $\rho$ coefficient to calculate the correlation between the ranked lists of nodes that each of these centrality metrics generates on the datasets above.

\begin{figure}[!t]
\centering
\includegraphics[width=.7\columnwidth]{./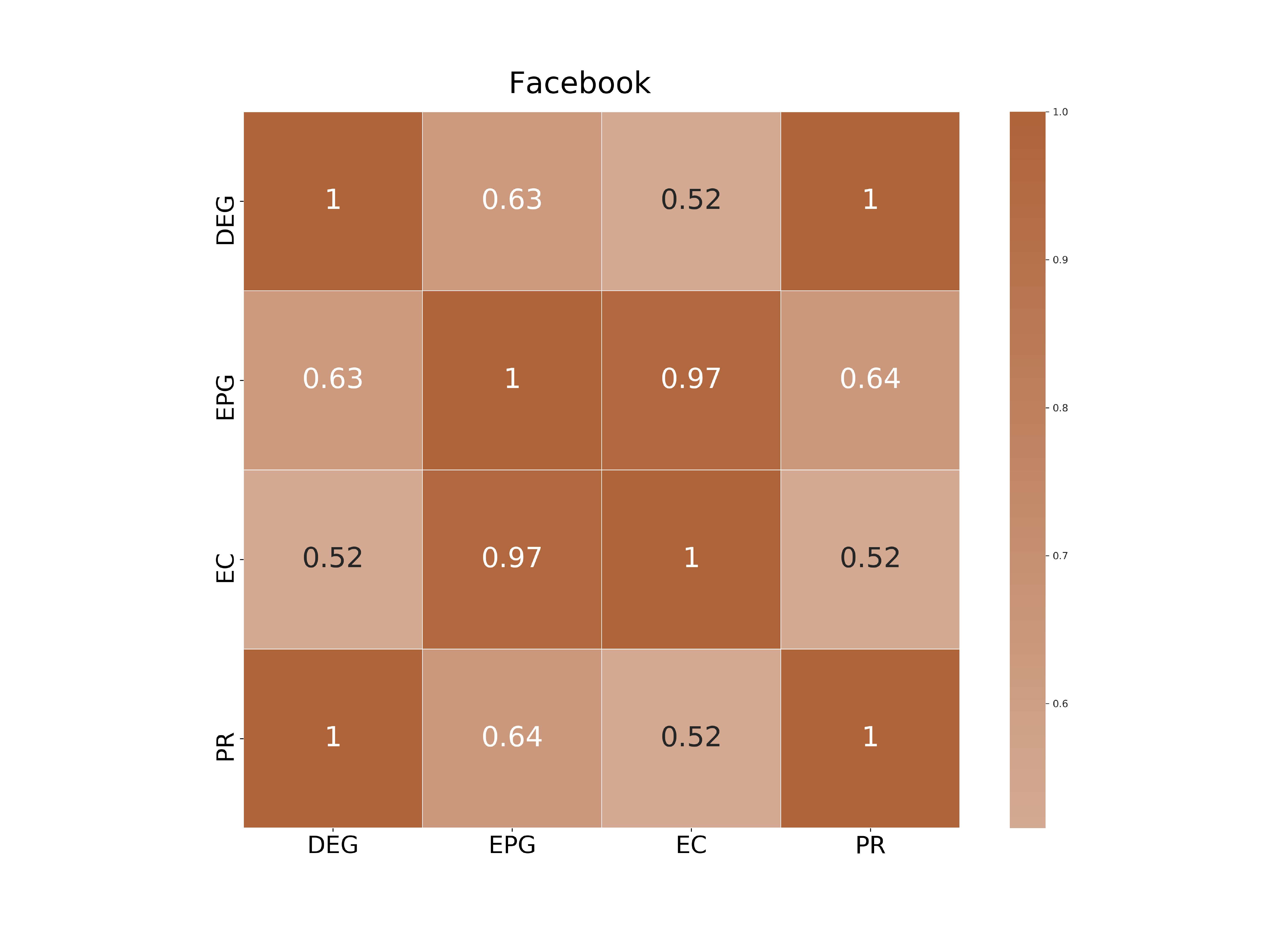}
\caption{Spearman's $\rho$ correlation between \textit{DEG, EPG, EC} and \textit{PR} on the \textsc{Facebook} dataset. }
\label{fig:corr-facebook-heatmap}
\end{figure}

\begin{figure}[!t]
\centering
\includegraphics[width=.7\columnwidth]{./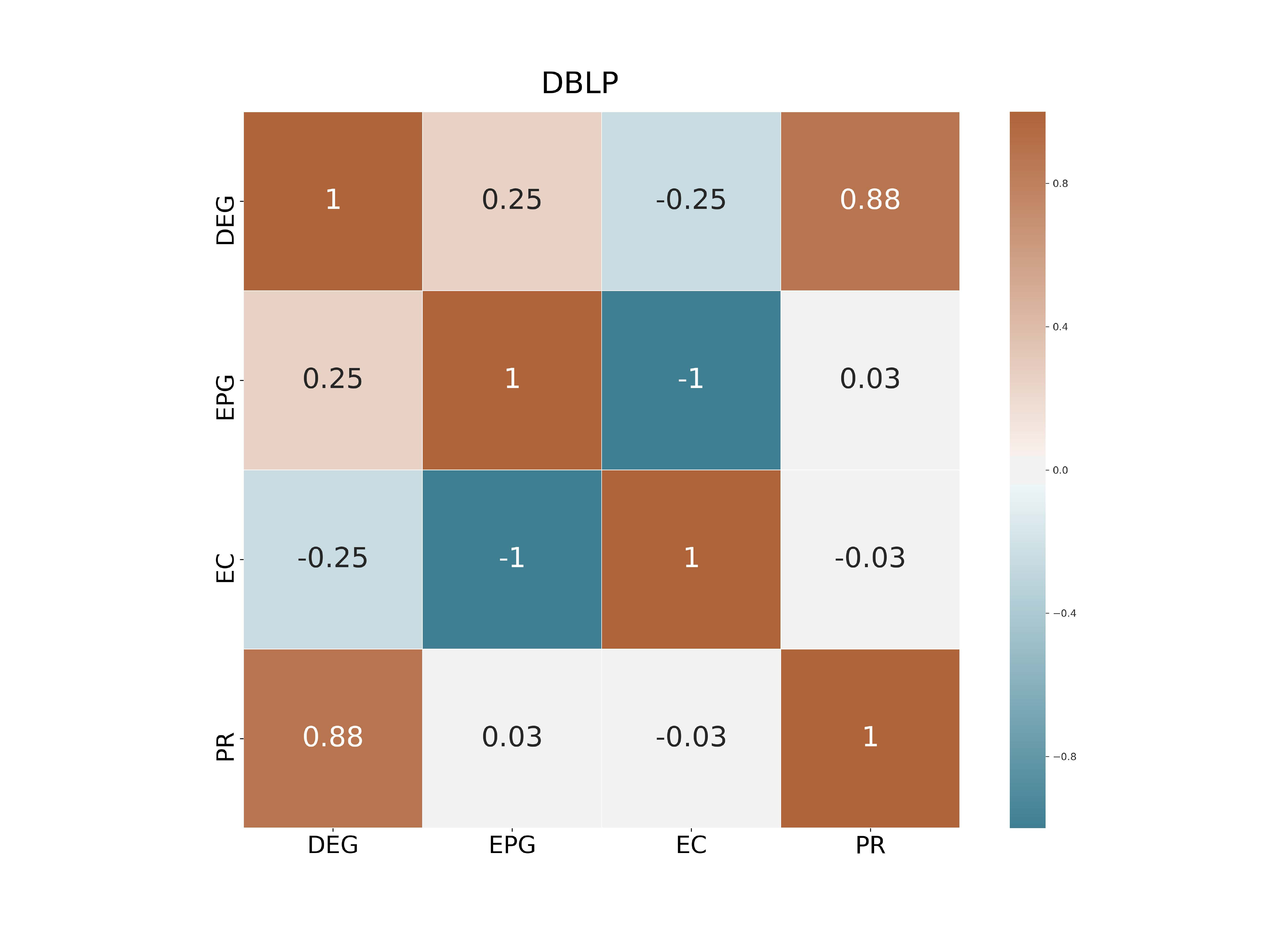}
\caption{Spearman's $\rho$ correlation between \textit{DEG, EPG, EC} and \textit{PR} on the \textsc{DBLP} dataset. }
\label{fig:corr-dblp-heatmap}
\end{figure}

\begin{figure}[!t]
\centering
\includegraphics[width=.7\columnwidth]{./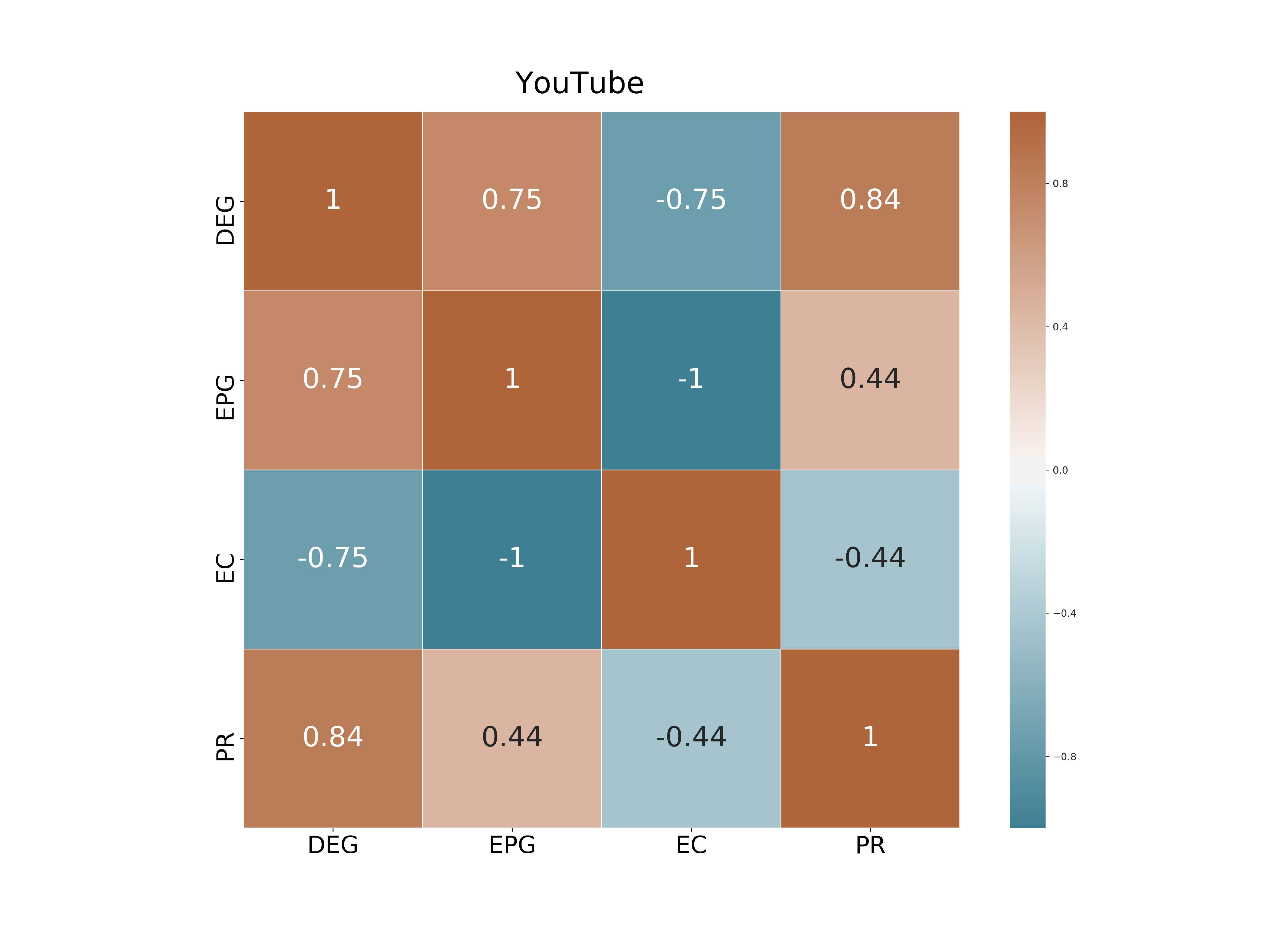}
\caption{Spearman's $\rho$ correlation between \textit{DEG, EPG, EC} and \textit{PR} on the \textsc{YouTube} dataset. }
\label{fig:corr-youtube-heatmap}
\end{figure}


Let us now analyse how \emph{DEG,} \emph{PR} and \emph{EC} correlate each other and how they are related with \emph{EPG.} (\emph{EPG} is discussed later as it does not depend on any parameter).
The main findings of our analysis are plotted in Figures \ref{fig:corr-facebook-heatmap} - \ref{fig:corr-youtube-heatmap} and can be summarized as follows:

\begin{enumerate}
\item \emph{DEG} and \emph{PR} are strongly positively correlated: we found Spearman's $\rho$ correlation coefficients ranging from $0.84$ (on \textsc{YouTube}) to $1$ (on \textsc{Facebook}). 
Such correlation is well-known in the literature because, in the case of undirected graphs, the {\em PR} scores of nodes are almost proportional to their degrees.

\item On \textsc{Facebook} \emph{EPG} and \emph{EC} are highly correlated while at the same time display a perfect negative correlation on the other two datasets.
This result is related to the property that the \emph{EPG} of a node is always proportional to its communicability, as we have shown in Section \ref{sub:geometric-exponential-potantial-gain}. 
When the spectral gap between the largest and the second-largest eigenvalue, i.e., $\lambda_1 - \lambda_2$, is large, then communicability is, up to a constant factor, exactly equal to \emph{EC;} see \cite{benzi2013total} for a detailed proof.
As a consequence, for graphs with a large spectral gap the \emph{EPG} of a node is proportional to its \emph{EC}; this explaining the large correlation values we observed in our tests.

\item The topological features of the given graph have a huge impact on $\rho$: for instance, \emph{DEG} and \emph{EC} are positively correlated on \textsc{Facebook} ($\rho = 0.52$) but they are negatively correlated on \textsc{DBLP} and \textsc{YouTube.}
Analogously, \emph{EPG} and \emph{PR} are positively correlated on \textsc{Facebook} and \textsc{YouTube} ($\rho$ = 0.64 and 0.44, respectively) but almost unrelated on \textsc{DBLP}.    
\end{enumerate}

Let us now consider the correlation between {\em DEG}, {\em EC}, {\em PR} and {\em EPG} and {\em GPG}; such correlation varies upon different choices of the parameter $\delta$.
Figures \ref{fig:corr-facebook-alpha}, \ref{fig:corr-dblp-alpha} and \ref{fig:corr-youtube-alpha}, show how $\rho$ varies as function of $\delta^{\star} = \delta / \delta_{\max}$, where $\delta_{\max} = \frac{1}{\lambda_1}$. 
We can draw the following conclusions:

\begin{enumerate}
\item The {\em GPG} and {\em Katz} have an almost perfect correlation, as one would expect (red line). 
In fact, {\em GPG} and {\em Katz} differ by a multiplicative and constant factor given by the degree.

\item We note some interesting facts about the correlation of {\em GPG} and {\em EPG} (orange line). 
Firstly, if $\delta^{\star}$ increases, then we generally have a (sometimes slight) increase in $\rho$.
The correlation coefficient $\rho$ is always positive but it ranges from $0.45$ in \textsc{DBLP} to 1 (on \textsc{Facebook} and \textsc{YouTube}).\\ 
To explain the differences emerging across our datasets we observe the difference of the geometric and the exponential potential gain of a node $i$ depends on $\frac{\lambda_i}{1-\lambda_i}- \lambda_i e^{\lambda_i}$ (see Equation \ref{eqn:difference-geometric-potential}). 
Therefore, on the basis of the distribution of eigenvalues, we conclude that for some datasets {\em GPG} and {\em EPG} might be strongly and positively correlated while they are negatively correlated elsewhere.

\item As for the correlation between {\em EC} and {\em GPG} (green line), we find a positive correlation on \textsc{Facebook} and a negative correlation on \textsc{DBLP} and \textsc{YouTube}.
In fact, for a fixed node, the {\em GPG} is proportional to the {\em Katz} score of that node; if $\delta^{\star} \to 1$, then the {\em Katz} score converges to {\em EC} and, in this case, the {\em GPG} of a node is proportional to its {\em EC}.
Similar considerations hold for {\em GPG} and {\em PR} (magenta line).

\item Consider now {\em DEG} (see blue line).
An increase in $\delta^{\star}$ generally causes a decrease in $\rho$, with the exception of the \textsc{DBLP} dataset where the correlation between {\em DEG} and {\em GPG} is almost perfect. 
As expected, when $\delta^{\star} \to 0$ the {\em GPG} is well approximated by {\em DEG}. 
Vice versa, as $\delta^{\star}$ increases, walks of length greater than one increasingly contribute to {\em GPG}, thus amplifying the difference between {\em DEG} and {\em GPG}.
\end{enumerate}

\begin{figure}[!t]
\centering
\includegraphics[width=.7\columnwidth]{./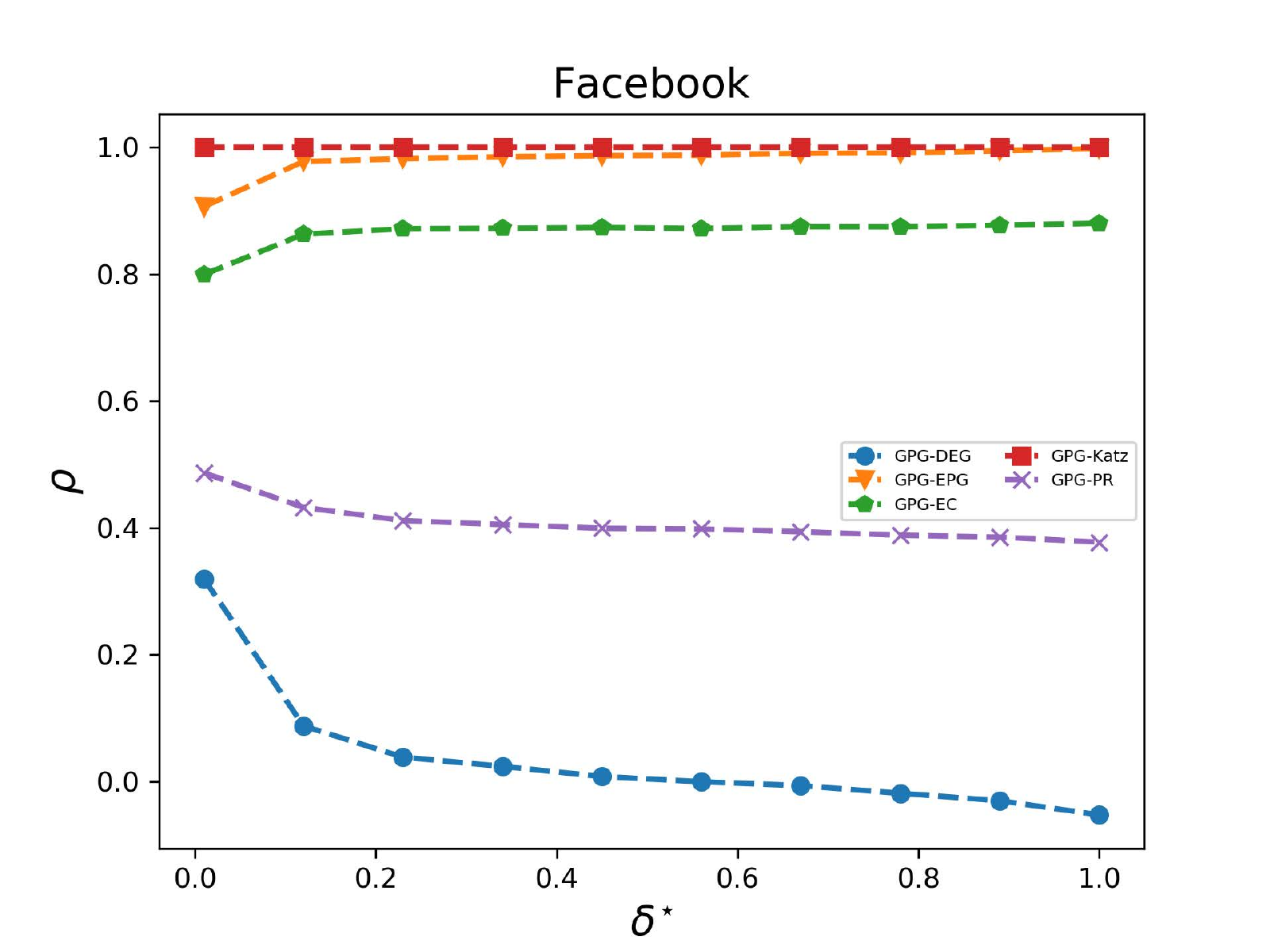}
\caption{Spearman's $\rho$ correlation between GPG and DEG, EC, PR, Katz as function of $\delta^{\star}$ on the \textsc{Facebook} dataset.}
\label{fig:corr-facebook-alpha}
\end{figure}

\begin{figure}[!t]
\centering
\includegraphics[width=.7\columnwidth]{./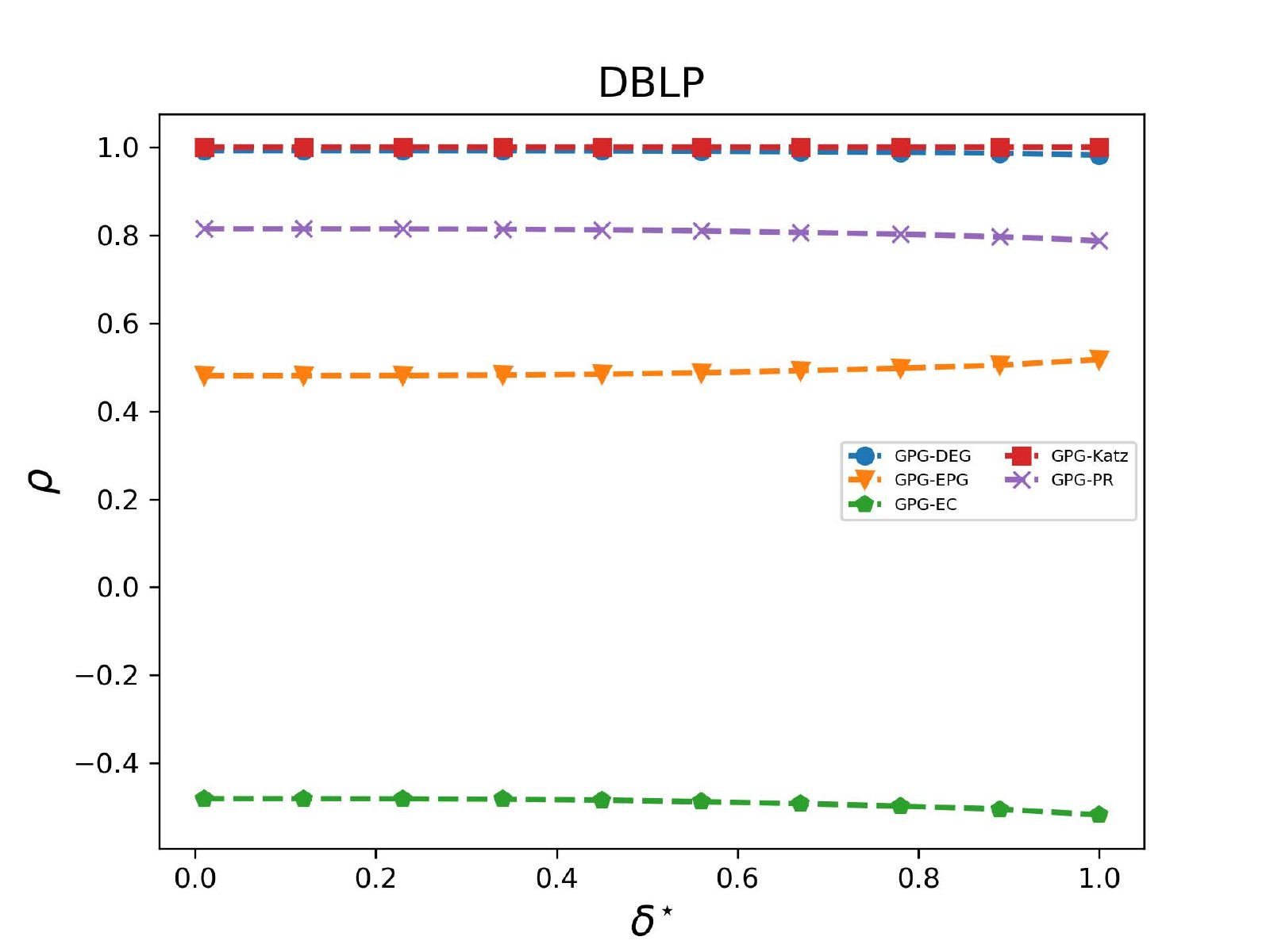}
\caption{Spearman's $\rho$ correlation between GPG and DEG, EC, PR, Katz as function of $\delta^{\star}$ on the \textsc{DBLP} dataset.}
\label{fig:corr-dblp-alpha}
\end{figure}

\begin{figure}[!t]
\centering
\includegraphics[width=.7\columnwidth]{./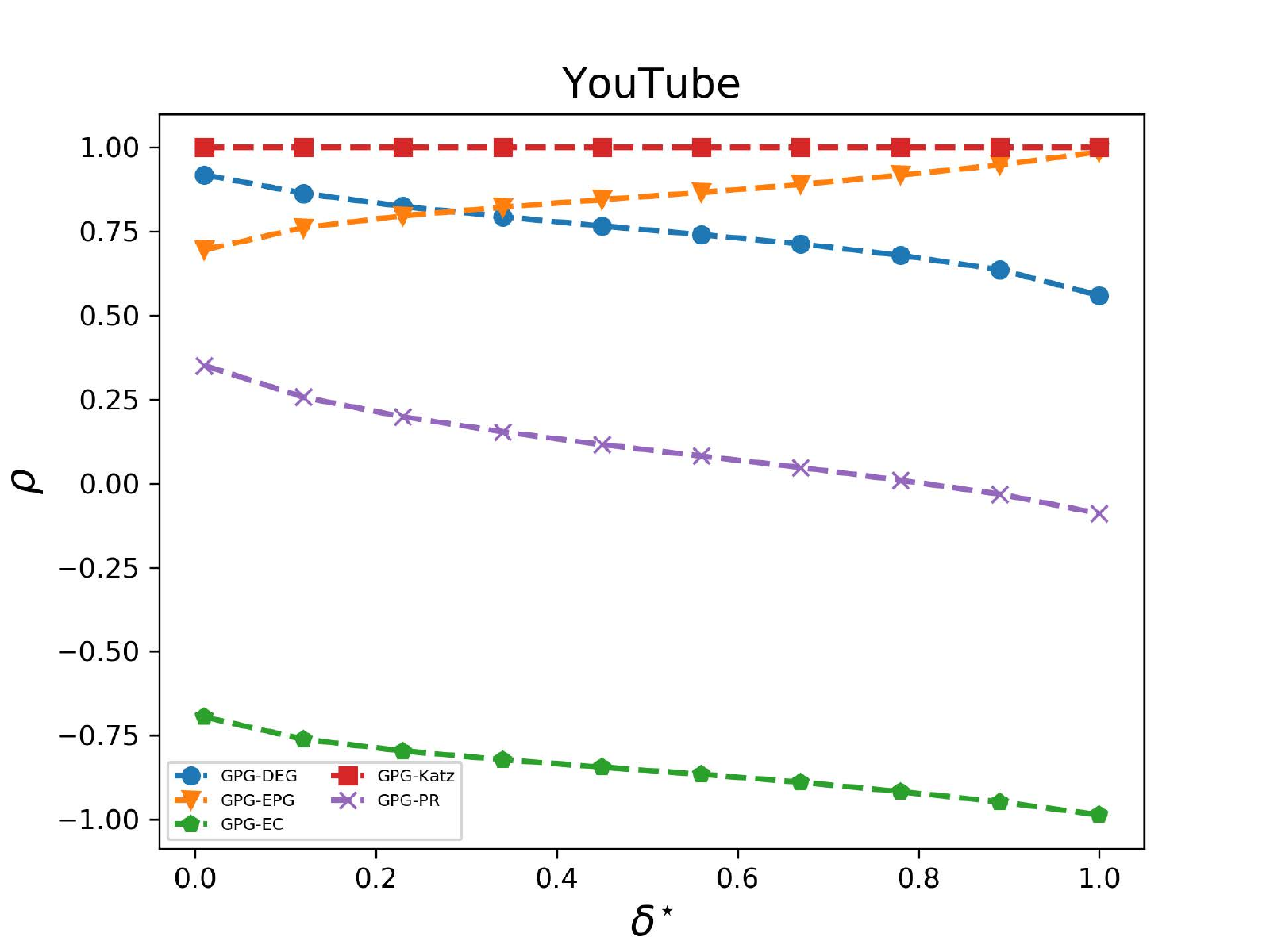}
\caption{Spearman's $\rho$ correlation between GPG and DEG, EC, PR, Katz as function of $\delta^{\star}$ on the \textsc{YouTube} dataset.}
\label{fig:corr-youtube-alpha}
\end{figure}

\noindent
The detailed analysis of the experimental results strongly suggests that the potential gain is a general framework, which unifies some of the main walk-based centralities.

\section{Conclusions}\label{sec:conc}

We have introduced the potential gain, a general framework which captures the ability of a node to act as a target point for navigation within a graph and unifies several walk-based centrality indices in graphs.
We have defined two variants of the potential gain, the geometric and exponential potential gain and proposed two iterative algorithms for each of them. The convergence of the algorithms was also proved;
their scalability was tested against three real large datasets. 

%
%
%
%



The present results provide a starting point to a better understanding of the theoretical properties of the potential gain.
Previous research has sought to axiomatise centrality and its metrics, and it would be extremely interesting to study which of these axioms are satisfied by the Potential gain. 
For instance, Boldi and Vigna \cite{boldi2014axioms} proposed three axioms, called {\em size,} {\em density} and {\em score monotonicity,} which may play a role in axiomatising the Potential Gain itself.

%

Another topic for future work is investigation of the relationship between network robustness and network navigability. 
To this end, we intend to design an experiment in which graph nodes are ranked on the basis of their geometric/exponential potential gain and then are progressively removed from the graph.
Basic properties about graph topology, such as the number and size of connected components shall be re-evaluated upon node deletion.
We also plan to study how adding edges can increase the geometric/exponential potential gain of a target group of nodes.

\ifCLASSOPTIONcompsoc
%
%

\ifCLASSOPTIONcaptionsoff
  \newpage
\fi



%
%
%
\bibliographystyle{IEEEtran}
\bibliography{potential_gain}

\end{document}

%



\begin{IEEEbiography}[{\includegraphics[width=1in,height=1.25in,clip,keepaspectratio]{../photos/de_meo}}]{Pasquale De Meo}
is associate professor of Computer Science at the Department of Ancient and Modern Civilizations at the University of Messina, Italy. 
His main research interests are in the area of social networks, recommender systems, and user profiling. 
His PhD thesis was selected as the Best Italian PhD thesis in Artificial Intelligence by the AI*IA (Italian Association for Artificial Intelligence). 
He has been Marie Curie Fellow at Vrije Universiteit Amsterdam. 
He serves as Associate Editor for IEEE Transactions on Cybernetics and IEEE ACCESS. 
More information about his research can be found at \url{http://dblp.uni-trier.de/pers/hd/m/Meo:Pasquale_De}.
\end{IEEEbiography}

\begin{IEEEbiography}[{\includegraphics[width=1in,height=1.25in,clip,keepaspectratio]{../photos/mark}}]{Mark Levene}
received his PhD in Computer Science in 1990 from Birkbeck College, University of London, having previously been awarded a BSc in Computer Science from Auckland University New Zealand in 1982. 
Following his PhD he was a faculty member in the Department of Computer Science at University College London (UCL).
In 2001 he joined Birkbeck College as a Professor of Computer Science, where he is the head of department and a member of the Experimental Data Science research group. 
His main research interests are web search and navigation, web data mining and applied machine learning, and human dynamics/sociophysics. 
He has published extensively in these areas, and has recently published a book called An Introduction to Search Engines and Web Navigation. 
\end{IEEEbiography}

\newpage

\begin{IEEEbiography}[{\includegraphics[width=1in,height=1.25in,clip,keepaspectratio]{../photos/fabrizio}}]{Fabrizio Messina}
received his PhD in Computer Science from Department of Informatics and Mathematics, University of Catania in 2009. He is currently Assistant Professor in the same department. His main research interests are Grid and Cloud  Computing, trust and recommender systems in federated cloud and grid, software platforms and simulations Autonomous vehicle control and flocking. 
\end{IEEEbiography}

\begin{IEEEbiography}[{\includegraphics[width=1in,height=1.25in,clip,keepaspectratio]{../photos/ale-passport}}]{Alessandro Provetti}
studied Informatics at Milan, Imperial College and Bologna. 
His main research results have been in Computational logic and more recently in Computational Social Science. 
He has held academic positions a U. of Texas El Paso, Milan, Messina and now Birkbeck, University of London.
He is currently the director of the Birkbeck Institute for Data Analytics, an institute that supports interdisciplinary research.
\end{IEEEbiography}





\end{document}